\newtheorem{theorem}{Theorem}[section]    
\newtheorem{definition}{Definition}[section] 
\newtheorem{corollary}[theorem]{Corollary}    
\newtheorem{lemma}[theorem]{Lemma}    
\renewcommand{\qed}{\hfill{$\rule{6pt}{6pt}$}} 
\renewenvironment{proof}{\noindent{\bf Proof:}}{\qed\\}
\newenvironment{proofof}[1]{\noindent{\bf Proof of #1:}}{\qed\\}
\numberwithin{equation}{section}
\newcommand{\complex}{{\mathbb C}}
\newcommand{\norm}[1]{\left\| #1 \right\|}
\newcommand{\Tr}{\mathrm{Tr}}
\newcommand{\Pos}{{\mathsf{Pos}}}
\newcommand{\ket}[1]{| #1 \rangle}
\newcommand{\ketbra}[2]{| #1 \rangle\!\langle #2 |}
\newcommand{\eqdef}{\coloneqq}
\newcommand{\suppress}[1]{}
\newcommand{\ith}{^{\text{th}}}
\newcommand{\ii}{\mathrm{i}}
\newcommand{\cA}{\mathcal{A}}
\newcommand{\cB}{\mathcal{B}}
\newcommand{\cD}{\mathcal {D}}
\title{\textbf{Dimension Independent and Computationally Efficient Shadow Tomography}}
\author{
Pulkit Sinha~\thanks{School of Computer Science,
and Institute for Quantum Computing, University
of Waterloo, 200 University Ave.\ W., Waterloo, ON,
N2L~3G1, Canada.
Email: \texttt{psinha@uwaterloo.ca}~.
} \\
Institute for Quantum Computing,\\
University of Waterloo}
\date{3 November 2024}
\begin{document}

\maketitle

\begin{abstract}
We describe a new shadow tomography algorithm that uses $n=\Theta(\sqrt{m}\log m/\epsilon^2)$ samples, for $m$ measurements and additive error $\epsilon$, which is independent of the dimension of the quantum state being learned. This stands in contrast to all previously known algorithms that improve upon the naive approach. The sample complexity also has optimal dependence on $\epsilon$.  Additionally, this algorithm is efficient in various aspects, including quantum memory usage (possibly even $O(1)$), gate complexity, classical computation, and robustness to qubit measurement noise. It can also be implemented as a read-once quantum circuit with low quantum memory usage, i.e., it will hold only one copy of $\rho$ in memory, and discard it before asking for a new one, with the additional memory needed being $O(m\log n)$. Our approach builds on the idea of using noisy measurements, but instead of focusing on gentleness in trace distance, we focus on the \emph{gentleness in shadows}, i.e., we show that the noisy measurements do not significantly perturb the expected values.
\end{abstract}

\section{Introduction}
\label{sec-introduction}
Consider the following statistical estimation problem: given $m$ events $E_1,E_2\dots E_m$, and i.i.d samples from an unknown probability distribution $\cD$ defined on the underlying sample space, estimate each $\Pr[E_i]$ up to an additive error $\epsilon$. This task is well studied, and it is known that it can be solved with just $\Theta(\frac 1 {\epsilon^2}\log \left(\frac m \delta\right))$ samples of $\cD$ (which we refer to as the \emph{sample complexity}), where $\delta$ is the allowed probability of failure. The algorithm uses the empirical mean estimator for each event $E_i$: for each sample $D_j\sim \cD$, record whether $E_i$ occurred or not, and then output the frequency divided by $n$ as the estimate for $\Pr[E_i]$. The distribution of this output is going to be $\mathrm{Binomial}(n,\Pr[E_i])/n$, which is known to concentrate around its mean. The dependence on $m$ in the expression is a by-product of the dependence on $\delta$: for a single event $E$, $\Theta\left(\frac 1 {\epsilon^2} \log \frac 1 \delta \right)$ samples are sufficient. So for $m$ events, one simply uses the algorithm designed for probability of failure $\frac \delta m$, for each event $E_i$, which due to the union bound, has an overall probability of failure at most $\delta$. The key point in this reduction is that the algorithm can independently estimate each probability, and only the probability of error needs to be tracked.

In the quantum case, probability distributions are generalized by quantum states $\rho\in D(\mathbb C^d)$, and events are generalized as POVM elements $M_i\in \mathrm{Pos}(\mathbb C^d)$, $M_i\leq I$. The corresponding problem in the quantum case is as follows: 
\begin{center}
    \textit{Given $m$ POVM elements $M_1,M_2\dots M_m$, and samples of an unknown quantum state $\rho$, estimate each $\Tr[\rho M_i]=\mathbb E_\rho[M_i]$ upto an additive error of $\epsilon$.}
\end{center} 
This is called the \emph{shadow tomography} problem.  Note that any algorithm solving this problem can also estimate mean values of observables, i.e., given Hermitian operators $O_1,O_2\dots O_n\in L(\mathbb C^d)$, we can reduce the problem of estimating each $\Tr[O_i\rho]$ to shadow tomography by appropriately scaling and shifting each $O_i$ to between $0$ and $I$.

This problem was introduced by Aaronson \cite{aaronson2018} (and was termed so by Steve Flammia) because of its relation to the more general problem of \emph{quantum state tomography}, where the objective is to instead learn the state $\rho$ up to an error of $\epsilon$ in trace distance. One can rephrase this problem as trying to find an estimate $\rho'$ for $\rho$ such that for any POVM element $M$, $|\mathbb E_\rho[M]-\mathbb E_{\rho'}[M]|\leq \epsilon$, which means that we effectively need the estimate for each possible $\mathbb E_\rho[M]$ to be within $\epsilon$. Aaronson referred to these $\mathbb E_{\rho}[M]$ as \emph{shadows}. In many scenarios where one considers learning a quantum state, they might only be interested in learning the values of a few specific shadows: any property of $\rho$ that can be expressed as a probability of a quantum circuit's output being $1$ (on the input being $\rho$) can be expressed as a shadow of $\rho$.

For $m=1$, the classical algorithm generalizes directly: $n=\Theta(\frac 1 {\epsilon^2} \log \frac 1 \delta)$ samples suffice, as one can perform the POVM $\{M,I-M\}$ on each sample, reducing the problem to the classical case. However, this approach does not work for $m\geq 2$, as there is no notion in general of ``simultaneously" measuring different POVMs together, although it is possible in certain special cases, for example, when the $M_i$'s commute. To directly reuse the classical algorithm, we would, in general, need to use new copies for each estimation, bringing the sample complexity to $\Theta\left(\frac m {\epsilon ^2} \log \frac m \delta \right)$. When $m$ is large, this is very inefficient compared to the classical algorithm. This raises a natural question: can we do better than just simply asking for new copies each time?

In many cases, the answer is yes. Numerous algorithms have been proposed for this problem, each with its own strengths and weaknesses. We discuss a few of them here.

Aaronson gave the first algorithm \cite{aaronson2018}, which achieved sample complexity $\Tilde O\left(\frac{\log^4 m \log d}{\epsilon^4}\right)$, which was based on his previous work on postselected learning of quantum states, and the Quantum OR bound. Then, Huang, Kueng, and Preskill \cite{Huang2020} published their now very well-known, classical shadows algorithm. Technically, it solves a different problem, but it can nonetheless be phrased as an algorithm solving shadow tomography. Given copies of $\rho$, it outputs a classical transcript that can be used to estimate the expectation of \emph{any} set of $m$ observables/POVM elements with bounded \emph{shadow norm}, say by $M$, with sample complexity $O\left(\frac {M^2}{\epsilon^2} \log \frac m \delta\right) $. Here, the shadow norm depends on the specific implementation of the algorithm, and examples of operators that have low shadow norm (with respect to some corresponding implementation) are $k$-local observables and low-rank observables. For arbitrary POVM elements, this shadow norm can be as large as $\Omega(\sqrt d)$, so the sample complexity of this algorithm for arbitrary POVMs has sample complexity $\Theta\left(\frac d {\epsilon^2} \right)$, scaling exponentially large in the number of qubits. Nonetheless, this is still much better than the $\Omega\left(\frac {d^2}{\epsilon^2}\right)$ copies required for full state tomography. Additionally, it uses only non-adaptive single-copy measurements. 

Next, Aaronson and Rothblum \cite{aaronsonrothblum2019} provided an algorithm that achieved sample complexity $\Tilde{O}\left(\frac{\log ^2 m\log^2d}{\epsilon^4}\right)$, improving upon Aaronson's original algorithm in its $m$ dependence. They named this as the \emph{Quantum Private Multiplicative Weights} algorithm. Their key idea was to solve shadow tomography using differentially private learning: their algorithm measured the sample mean with a Laplace noise, showing that such measurements are "gentle", through their connection to differential privacy. Finally, they used the Matrix Multiplicative Weights algorithm \cite{OLQS} (M.M.W.) as a differentially private learner.

Later, Badescu and O'Donnell \cite{badescu} refined this approach, reducing the sample complexity to $\Tilde{O}\left(\frac{\log^2m\log d}{\epsilon^4}\right)$. The high-level descriptions of both algorithms are quite similar, the main difference being that Badescu and O'Donnell instead related the analysis to ideas in adaptive learning. They separated the algorithm into two components: Threshold Search, which uses the noisy measurements to figure out wrong estimates, and the M.M.W. algorithm to maintain a guess state which is updated whenever the threshold search detects a mistake. Unlike Aaronson and Rothblum's algorithm, the threshold search routine asks for fresh copies of $\rho$ each time it detects a mistake, and uses exponential noise instead of Laplace. 

Later, Bostanci and Bene Watts \cite{Bostanci} showed an alternate algorithm that solved Threshold search by using their improved Quantum Event Finding algorithm, which led to the same sample complexity as Badescu and O'Donnell's.

Finally, we also have the relatively new algorithm by Chen, Li and Liu \cite{chen2024}, which in the regime $\epsilon^{-1}\geq \mathrm{poly}(d)$ has the sample complexity $\Theta\left(\frac {\log m}{\epsilon ^2}\right)$, i.e, it achieves the best possible sample complexity in the extremely high precision regime. This algorithm, like the classical shadows algorithm, does not need the POVM elements beforehand: it outputs a transcript that can be used to estimate the expected values of any set of $m$ POVM elements. Their algorithm was based on their previous work on a representation-theoretic algorithm for full-state tomography.

All the discussed algorithms have some form of inherent dimension dependence, which can somewhat be boiled down to the fact that all of them treat shadow tomography as a form of weak state tomography, i.e, they are inherently trying to perform full state tomography, but weakly, as they effectively maintain a classical representation of the guess of the quantum state. Also note that because of this, these algorithms are computationally inefficient, as the estimations require $\mathrm{poly}(d)$ time classical operations. Although Brandao et. al. \cite{brandao_et_al} did show that the first algorithm by Aaronson can indeed be made efficient for low-rank observables, it is still inefficient in the general case. So, can we overcome this inherent dimension dependence?

To some extent, yes. In this work, we present a shadow tomography algorithm that has sample complexity sublinear, but still polynomial, in $m$, but has no dependence on $d$. It is also computationally efficient in many aspects:

\begin{theorem}\label{main_thm}
    There is a shadow tomography algorithm that, given $m$ POVM elements $M_1,M_2\dots M_m$, and $\epsilon,\delta>0$, and copies of unknown state $\rho$, outputs each $\mathbb E_\rho[M_i]$ with additive error at most $\epsilon$, that has sample complexity at most $n=\Theta\left(\frac {\sqrt m}{\epsilon^2} \log \frac 1 \delta\right)$. Furthermore, depending on the implementation, it can have the following additional properties:
    \begin{itemize}
        \item Given Quantum Circuits of size $S_i$ implementing each $M_i$, the quantum circuit for the algorithm has size $\Theta\left(\frac {nm} {\epsilon^2}\log \frac 1 \delta +n \sum S_i\right)$ .
        \item It can be implemented as a read-once quantum circuit, i.e, it can only access one copy of $\rho$ at a time, with additional memory $\Theta\left(\frac m {\epsilon^2}\log \frac 1 {\epsilon} \right)$.
        \item It can also be implemented as a quantum circuit with only $O(1)$ ancilla qubits (apart from those needed to implement the $M_i$), assuming access to mid-circuit measurements, but with an overhead of $\Theta\left( \frac n {\epsilon^2}\log \left(\frac 1 \delta\right) \sum S_i\right)$ gates. This overhead reduces to $\Theta( \frac {n+m} {\epsilon^2}\log \frac 1 \delta \log n)$ if we have access to additional $O(\log n)$ ancilla qubits.
        \item There is no non-trivial classical post-processing needed.
        \item All the implementations are robust against qubit measurement noise, i.e., measurement noise only impacts the accuracy of the estimates.
    \end{itemize}
\end{theorem}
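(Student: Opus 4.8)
The plan is to analyze a streaming (read-once) algorithm of the following shape. Maintain $m$ real accumulators $S_1,\dots,S_m$, all set to $0$. Request copies of $\rho$ one at a time; on each copy, perform for $i=1,\dots,m$ a single \emph{weak} binary measurement attached to $M_i$ — for instance the channel with Kraus operators $\tfrac1{\sqrt2}\bigl(\sqrt{1-\lambda}\,\id\pm\sqrt\lambda\,(2M_i-\id)\bigr)$, having first dilated $M_i$ to a genuine projector if it is not one, with strength $\lambda=\Theta(1/\sqrt m)$ — and add to $S_i$ the appropriately rescaled signed outcome, chosen so that its conditional expectation equals $\Tr[\rho'M_i]$ for the current state $\rho'$. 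Then discard the copy. After $n=\Theta\!\bigl(\sqrt m\,\epsilon^{-2}\log\delta^{-1}\bigr)$ copies, output $\hat\theta_i=S_i/n$ (or, to drive down the failure probability, run $\Theta(\log\delta^{-1})$ independent instances and take coordinatewise medians). All the structural claims follow essentially for free from this shape: the only state carried between copies is the vector of $m$ accumulators, which gives the read-once implementation; the $O(1)$-ancilla version just recomputes each $M_i$ in place instead of keeping intermediate registers, trading space for the stated gate overhead; there is no classical post-processing beyond dividing by $n$; and qubit measurement noise only rescales the effective weak-measurement strength, which the analysis will be set up to absorb.

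For correctness, fix an index $i$ and write, summing over the $n$ copies, $\hat\theta_i=\Tr[\rho M_i]+\tfrac1n\sum_{c=1}^n\bigl(Z_i^{(c)}+W_i^{(c)}\bigr)$. Here $Z_i^{(c)}$ is the mean-zero noise of the $i$-th weak measurement on copy $c$, conditionally independent across copies, with variance $\Theta(1/\lambda)=\Theta(\sqrt m)$; after averaging over the $n$ copies this contributes error $\Theta\!\bigl(\sqrt{\sqrt m/n}\bigr)=\Theta(\epsilon)$, which is exactly where the sample size $\Theta(\sqrt m/\epsilon^2)$ comes from. The second term, $W_i^{(c)}=\Tr\bigl[(\rho^{(c)}_{i-1}-\rho)M_i\bigr]$, is the \emph{shadow drift}: the amount by which the true expected value has moved because the earlier weak measurements $M_1,\dots,M_{i-1}$ were performed on that same copy. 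The heart of the argument is a \emph{gentleness-in-shadows} lemma quantifying this. Writing $W_i^{(c)}=\sum_{t<i}\Tr\bigl[(\rho^{(c)}_t-\rho^{(c)}_{t-1})M_i\bigr]$ and expanding the weak-measurement update to second order in $\sqrt\lambda$, each increment splits into (i) a first-order piece proportional to the signed outcome, which forms a bounded-difference martingale in $t$ because the weak channel is trace-preserving in expectation and its first-order state change is mean-zero, and (ii) a systematic second-order piece coming from $\mathbb E[\rho_t\mid\rho_{t-1}]-\rho_{t-1}=\lambda\,(A_t\rho_{t-1}A_t-\rho_{t-1})$ with $A_t=2M_t-\id$. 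The martingale piece has per-copy variance $O(m\lambda)=O(\sqrt m)$ — the same order as the measurement noise — so it too averages down to $O(\epsilon)$ over the $n$ copies and concentrates by Azuma/Freedman; the remaining job is to show the systematic piece is $O(\epsilon)$ on its own, since it is the same on every copy and hence does not average away.

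The main obstacle is precisely this last estimate: bounding the accumulated systematic drift $\lambda\sum_{t<i}\Tr\bigl[(A_t\rho A_t-\rho)M_i\bigr]=-4\lambda\sum_{t<i}\mathrm{Re}\,\Tr\bigl[\rho\,M_tM_i(\id-M_t)\bigr]$ (up to corrections from $\rho_{t-1}\neq\rho$) by $O(\epsilon)$. A term-by-term triangle inequality only gives $O(m\lambda)=O(\sqrt m)$, which is useless, and one can write down inputs — e.g. many repetitions of a single rank-one $M_t$ highly coherent with a rank-one $M_i$ on which $\rho$ is supported — for which a fixed measurement order really does drive this quantity to order $\sqrt m$; so any workable algorithm must do better than that, and the estimate cannot come from naive norm bounds. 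The plan would be to extract genuine cancellation: note that if the $M_t$'s happened to sum to $\id$ the inner sum collapses to $M_i-\sum_t M_tM_iM_t$, an operator of norm $O(1)$, so one wants either to reduce to (or emulate) that situation, or to randomize the subset and order of POVM elements measured per copy so the relevant sums behave like bounded martingales rather than worst-case sums, or — most likely — to track the perturbation not pointwise in $i$ but through an aggregate potential (an $\ell_2$-over-$i$ or operator-norm quantity) that provably grows by only $O(\lambda)$ per weak measurement and from which a uniform-over-$i$ bound can be read off. Getting one of these to hold uniformly over all $m$ shadows and all states $\rho$ is the crux; once the gentleness-in-shadows estimate is in place, the martingale concentration, the median boosting, and the gate/ancilla/noise bookkeeping are all routine.
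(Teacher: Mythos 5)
There is a genuine gap, and it is the one you yourself flag in your last paragraph: the systematic second\hyphenchar\font=45{}-order back-action of your per-copy weak measurements. In your architecture each copy of $\rho$ individually absorbs $m$ weak measurements of strength $\lambda=\Theta(1/\sqrt m)$ before being discarded, so the deterministic part of the shadow drift is $\lambda\sum_{t<i}\Tr[(A_t\rho A_t-\rho)M_i]=O(m\lambda)=O(\sqrt m)$; it is identical on every copy, does not average down over $n$, and (as your own rank-one example shows) cannot be rescued by norm bounds or by any cancellation that holds uniformly over adversarial $\set{M_t}$ and $\rho$. None of the three escape routes you sketch (emulating $\sum_t M_t=\id$, randomizing the order, or an aggregate potential) is carried out, and the first two demonstrably fail for worst-case inputs, so the central ``gentleness-in-shadows'' lemma is missing. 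A further structural worry: between copies you carry only classical accumulators, i.e.\ yours is a single-copy measurement strategy, whereas the paper explicitly invokes the lower bound of Chen et al.\ to argue that its read-once implementation must carry \emph{quantum} memory (the ancilla registers) across copies.

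The paper sidesteps the systematic-drift problem entirely by inverting the roles of copies and measurements: for each $M_j$ it performs \emph{one} collective noisy measurement coupling all $n$ copies simultaneously to a shared register of $k=\Theta(\epsilon^{-2}\log(1/\delta))$ ancilla qubits via $\exp\br{-\ii\tfrac{\pi}{6n}\br{\sum_i M_{j,(i)}}\otimes\br{\sum_l X_{(l)}}}$. Deferring the ancilla measurement to the $X$-eigenbasis (which commutes with this unitary) shows that, conditioned on the observed eigenvalue $\lambda_j$ of $\sum_l X_{(l)}$, the post-measurement state is exactly the product state $\br{e^{-\ii\pi\lambda_j M_j/(6n)}\rho_j\,e^{\ii\pi\lambda_j M_j/(6n)}}^{\otimes n}$. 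Since $\lambda_j$ is a sum of $k$ symmetric $\pm1$ variables, the rotation angle per copy is $O(\sqrt k/n)$ with high probability rather than your $O(\sqrt\lambda)=O(m^{-1/4})$ per measurement: the first-order shadow perturbation $\tfrac{\pi\lambda_j}{6n}\Tr[\ii[M_j,M_i]\rho_j]$ is an honest mean-zero martingale difference handled by Azuma, and the second-order systematic term is only $O(\lambda_j^2/n^2)$ per round, summing to $O(mk/n^2)=O(\epsilon^2/\log(1/\delta))\ll\epsilon$ at the chosen $n=\Theta(\sqrt m\,\epsilon^{-2}\log(1/\delta))$ --- no cancellation across rounds is needed at all. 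In short, your error decomposition (measurement noise plus martingale drift plus systematic drift) mirrors the paper's, but the collective-coupling architecture is what makes the systematic piece negligible, and without it your proof cannot be completed as stated.
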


In Section \ref{sec_low_mem}, we discuss an alternative algorithm based on similar ideas that achieves the same sample complexity and has slightly better scaling in gate complexity and memory usage, with the main drawback of not being robust to qubit measurement noise. Most notably, that algorithm allows the memory usage in the read-once circuit implementation to be reduced to $O(m\log n)$.

The sample complexity obtained is currently the best known in the regime $m=o\,(\log^2 d)$.
The classical post-processing referenced above refers simply to averaging, square roots, and $\arcsin$. The key idea is, as in \cite{aaronsonrothblum2019, badescu}, to use noisy measurements, however, instead of bounding the trace distance between the pre and post-measurement states, we try to see how the expected values change. Our algorithm can be viewed as a sequence of independent estimations of the different $M_i$, each step reusing all the copies of $\rho$. Upon measuring the first estimate, there is a clear collapse in the state, leading the resultant to be no longer a product state. However, if we don't condition on any one particular value of the estimate, then we find that the post-measurement state can be written as a mixture of i.i.d copies of some quantum state, which with a high probability have expected values very close to the original state $\rho$. This leads to a union-bound style argument similar to the classical case.

In our exact implementation, for each estimation, we create a noisy encoding of the estimate in ancilla qubits and measure them on a computational basis to finally obtain the estimates. On the other hand, the unitary operator that creates the encoding commutes with Hadamard basis measurements, so if we, for the sake of analysis, measure in the Hadamard basis instead, then we find that this procedure has induced a more general form of phase kickback onto the copies of $\rho$, and this phase kickback itself has a product structure, which turns out, with high probability, does not change the expected values for other measurements much. One can think of this as a form of \emph{gentleness in shadows}, where instead of small deviation with respect to trace distance, it is the shadows that are left relatively unperturbed.

Notice that we are only making claims about the expected values. This is because, with our measurement, the post-measurement state is indeed very far from the initial state, however for the purposes of performing shadow tomography this change is immaterial as long as the expected values, or the shadows, do not change much. 

Due to its various properties, we envision our algorithm as more practical than the other discovered algorithms in certain scenarios. For example, consider a quantum algorithm that involves the creation of a specific quantum state $\rho$, which is then repeated multiple times to estimate some $m$ predetermined properties. In this, say the creation of the state is the main bottleneck. Then, our algorithm leads to a quadratic speed up as instead of repeating the creation process $\Theta(m\log m/\epsilon^2)$ many times, we only need to do it $\Theta(\sqrt{m}\log m/\epsilon^2)$ times, and the additional gate complexity is still overshadowed by the complexity of the creation process. However, if we happen to use any of the other known algorithms, even though we invoke the creation process a much smaller number of times, the additional gate complexity and classical computation needed might completely negate any of the speedups we get.

Usually, for such learning problems, the best one can hope for is an efficient single-copy measurement procedure. However, since it was proven in \cite{chen_lb} that one cannot perform better than the naive algorithm with respect to sample complexity without access to quantum memory, single-copy measurements will not provide any speedup at all. The next best thing one can then hope for is a read-once circuit, which only has access to 1 copy of $\rho$ at a time (which is then immediately discarded before the next copy is received). For such a circuit, the additional quantum memory used becomes another important parameter of computational hardness, which, in our case, is again independent of the dimension. If, instead, the algorithm is allowed to revisit the same copy, then just $O(\log n)$ extra memory suffices for our algorithm without any additional gate overhead dependent on  $S_i$'s. If even such an overhead is alright, then our algorithm can be implemented with just $O(1)$ ancilla qubits.

We end with an example of a restriction on the $m$ POVMs that leads to better sample complexity. If all the spectral norms of the commutators are bounded by some small $C_{max}$, then the sample complexity reduces down to $n=O\left(  (C_{max}\sqrt m+1)\frac{1}{\epsilon^2} \log \frac {m}\delta+ \sqrt{C_{max}}\frac{\sqrt{m\log (m/\delta)}+\log(m/\delta)}{\epsilon^{1.5}} \right)$. We sketch out the proof of this by outlining the inequalities that can be tightened in the main proof.
\paragraph{Acknowledgements:}
This research was supported in part by NSERC Canada and by a Mike and Ophelia Lazaridis Fellowship. We thank Ashwin Nayak and Ryan O'Donnell for helpful discussions. 
\paragraph{Other related works:} There has also been some work on restricted forms of shadow tomography. We include a non-exhaustive list for reference. See \cite{king2024triplyefficientshadowtomography,chen2024optimaltradeoffsestimatingpauli,Grier2024sampleoptimal,huang2024}. In particular, \cite{huang2024} achieves very similar bounds for adaptively chosen local and Pauli observables. For a recent survey on Quantum Learning, see \cite{Anshu2024}.
\section{Preliminaries}
\label{sec-preliminaries}

\paragraph{Quantum Information notation.}

For a thorough introduction to the basics of quantum information, we refer the reader to the book by Watrous~\cite{W18-TQI}. We briefly review the notation that we use in this article. 

We consider only finite-dimensional Hilbert spaces in this work and denote them either by capital script letters like~$\cA$ and $\cB$, or directly as~$\complex^m$ for a positive integer~$m$.
A \emph{register\/} is a physical quantum system, and we denote it by a capital letter, like~$A$ or~$B$. A quantum register~$A$ is associated with a Hilbert space~$\cA$, and the state of the register is specified by a unit-trace positive semi-definite operator on~$\cA$. The state is called a \emph{quantum state\/}, or also as a \emph{density operator\/}. We denote quantum states by lowercase Greek letters, like~$\rho$ and~$\sigma$.  We use notation such as~$\rho^A$ to indicate that register~$A$ is in state~$\rho$ and may omit the superscript when the register is clear from the context. We use \emph{ket\/} and \emph{bra\/} notation to denote unit vectors and their adjoints in a Hilbert space, respectively. For two linear operators $A,B$ defined on the same Hilbert space $[A,B]\eqdef AB-BA$ denotes the \emph{commutator} of the two operators. It is to be noted that if $A$ and $B$ are Hermitian, then $[A,B]$ is skew-Hermitian, i.e., $\ii[A,B]$ is a Hermitian operator.

For a linear operator~$M$ on some Hilbert space, $\norm{M}$ denotes the operator norm with respect to Euclidean distance: 
        \[\norm{M}= \max_{\ket{\psi}\ne 0} \frac{\norm{M\ket \psi}_2}{\norm{\ket \psi}_2}.\]
This coincides with the definition of Schatten-$\infty$ norm, which is the largest singular value of $M$. For normal operators, this is the same as the largest magnitude of any eigenvalue of M.

The states evolve by the action of unitary operators $U\in L(\mathbb C^d)$, where the action is $\rho\mapsto U \rho U^{\dagger}$. The Pauli $X$ gate refers to the unitary operator with the matrix
    \[X\eqdef\begin{pmatrix}
        0&1\\
        1&0\\
    \end{pmatrix}.\]

    Its eigenvectors are $\ket +\eqdef \frac {1}{\sqrt 2} (\ket 0 +\ket 1)$ and $\ket -\eqdef \frac {1}{\sqrt 2} (\ket 0 -\ket 1) $, with eigenvalues $+1$ and $-1$ respectively. Corresponding to this, the gate $R_X$ parametrized by $\theta$ is defined as 

    \[R_X(\theta)\eqdef \exp\left(-\frac \ii 2 X\right)= \begin{pmatrix}
        \cos\left(\frac \theta 2\right)& -\ii\sin\left(\frac \theta 2\right)\\
        -\ii\sin\left(\frac \theta 2\right)&\cos\left(\frac \theta 2\right)
    \end{pmatrix}.\]

    For any unitary gate $U$, we can define the corresponding controlled version as $CU\eqdef \ketbra{0}{0}\otimes I+\ketbra 11 \otimes U$

    We now define two notions of measurement. A \emph{Positive Operator Valued Measurement} (POVM) refers to set of positive matrices $\mathcal M=\{P_1,P_2\dots P_m\}\subset \Pos(\mathbb C^d)$ such that 
    \[\sum_{i=1}^{m} P_i=I.\]

    Upon measuring $\rho$ with POVM $\mathcal M$, the $i\ith $ outcome is obtained with probability $\Tr[P_i\rho]$. In the canonical implementation of this POVM, the post-measurement state is $\frac{\sqrt{P_i}\rho{\sqrt P_i}}{\Tr[P_i\rho]}$\\

    The other notion is of the measurement of an observable: given a Hermitian operator $M$, it can be measured as an \emph{observable} using the POVM $\mathcal M=\{P_\lambda:\lambda\in \mathrm{spectrum}(M)\}$, where $P_\lambda$ refers to the projector onto the $\lambda$-eigenspace of $M$. The outcome of this measurement is treated to be $\lambda$ itself, so this way the expected value of observed $\lambda$ upon measurement turns out to be exactly $\Tr[M\rho]$. Notice that for any POVM element $P$ (i.e., $0\leq P\leq I$), $\Tr[P\rho]$ is also the probability of observing the corresponding outcome in a POVM containing $P$. Keeping this in mind, we define $\mathbb E_{\rho}[M]\rho\eqdef \Tr[M\rho]$ for all Hermitian operators, noting that it also refers to the probability of observing the corresponding outcome if it is also a POVM element.
\\

\paragraph{Distributions and Tail bounds:} These definitions and results are discussed in much more detail in chapter 2 of the book by Vershynin \cite{HDP}.
\begin{lemma}[Markov's Inequality,\cite{HDP}]
For distribution $X$ over the nonnegative real numbers, \emph{Markov's inequality} says that for any positive real $a$ we have
\[\Pr[X \geq a]\leq \frac 1 a \cdot \mathbb E[X]\]
\end{lemma}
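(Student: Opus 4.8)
The plan is to prove the inequality by the standard pointwise-domination argument. First I would introduce the indicator random variable $Y \eqdef \mathbf{1}\{X \geq a\}$, which equals $1$ on the event $\{X \geq a\}$ and $0$ otherwise. The crucial observation is that, because $X$ takes only nonnegative values and $a > 0$, the pointwise bound $a \, Y \leq X$ holds: on the event $\{X \geq a\}$ it reads $a \leq X$, which is true by definition of the event, and on the complement it reads $0 \leq X$, which is true by nonnegativity of $X$.

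Next I would take expectations of both sides and use monotonicity and linearity of expectation to get $a \, \mathbb{E}[Y] \leq \mathbb{E}[X]$. Since $\mathbb{E}[Y] = \Pr[X \geq a]$ by definition of the indicator, this is exactly $a \cdot \Pr[X \geq a] \leq \mathbb{E}[X]$, and dividing through by the positive number $a$ yields the claimed bound. An essentially equivalent route, which sidesteps indicators and works uniformly whether $X$ is discrete or continuous, is to split the defining integral for the mean:
\[
\mathbb{E}[X] \;=\; \int_{[0,\infty)} x \, dF_X(x) \;\geq\; \int_{[a,\infty)} x \, dF_X(x) \;\geq\; a \int_{[a,\infty)} dF_X(x) \;=\; a \, \Pr[X \geq a],
\]
where the first inequality discards the nonnegative contribution of $[0,a)$ and the second replaces $x$ by the smaller value $a$ throughout the region of integration.

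I do not expect any genuine obstacle here: the statement is elementary and the proof is a single line once the right auxiliary quantity is written down. The only points that warrant care are the two hypotheses actually being used — nonnegativity of $X$, which is precisely what makes the discarded piece of the expectation (equivalently, the value of $aY$ on $\{X < a\}$) lie below $X$, and strict positivity of $a$, which is needed to legitimately divide by $a$ at the end; dropping either hypothesis makes the inequality false.
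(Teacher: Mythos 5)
Your proof is correct: the pointwise bound $a\,\mathbf{1}\{X\geq a\}\leq X$ (valid precisely because $X\geq 0$ and $a>0$) followed by taking expectations is the standard argument, and your integral-splitting variant is equivalent. The paper itself gives no proof for this lemma — it is simply cited from the reference — so there is nothing to compare against; your argument is complete and identifies exactly the two hypotheses being used.
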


\begin{definition}
    The \emph{moment generating function} (M.G.F.) of $X$ is the the following function (for $a\in \mathbb R$):
\[f(a)=\mathbb E[ \exp(aX)]\]
\end{definition} 

\begin{lemma}[Hoeffding's inequality\cite{Hoeffding1963}]\label{lem_hoeffding} 
    For $X_1,X_2\dots X_n$ being $n$ independent random variables, each supported on $[a_i,b_i]$ respectively, then for the sum $S=\sum_{i=1}^n X_i$, we have for any $t\geq 0$:
    \[\Pr[|S-E[S]|\geq t]\leq 2 \exp\left(-\frac {2t^2}{\sum_{i=1}^n{(b_i-a_i)}^2}\right).\]
\end{lemma}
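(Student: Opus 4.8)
The plan is to use the standard exponential-moment (Chernoff) method combined with Hoeffding's lemma, which bounds the moment generating function of a bounded, mean-zero random variable. First I would reduce to a one-sided estimate: it suffices to prove $\Pr[S-\mathbb{E}[S]\geq t]\leq \exp\!\left(-2t^2/\sum_{i=1}^n(b_i-a_i)^2\right)$, since applying this bound to the variables $-X_i$ (each supported on $[-b_i,-a_i]$, an interval of the same length $b_i-a_i$) controls the lower tail $\Pr[S-\mathbb{E}[S]\leq -t]$, and a union bound over the two tail events produces the factor of $2$.

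For the one-sided bound, fix $\lambda>0$. Since $x\mapsto e^{\lambda x}$ is increasing and positive, Markov's inequality applied to $\exp(\lambda(S-\mathbb{E}[S]))$ gives $\Pr[S-\mathbb{E}[S]\geq t]\leq e^{-\lambda t}\,\mathbb{E}\!\left[\exp(\lambda(S-\mathbb{E}[S]))\right]$. Because the $X_i$ are independent, so are the shifted variables $X_i-\mathbb{E}[X_i]$, and the expectation factorizes as $\prod_{i=1}^n \mathbb{E}\!\left[\exp(\lambda(X_i-\mathbb{E}[X_i]))\right]$.

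The key technical ingredient is Hoeffding's lemma: for any random variable $Y$ with $\mathbb{E}[Y]=0$ and $Y\in[a,b]$ almost surely, $\mathbb{E}[e^{\lambda Y}]\leq \exp(\lambda^2(b-a)^2/8)$. I would prove this by convexity of $y\mapsto e^{\lambda y}$ on $[a,b]$: writing $e^{\lambda y}\leq \frac{b-y}{b-a}e^{\lambda a}+\frac{y-a}{b-a}e^{\lambda b}$, taking expectations, and using $\mathbb{E}[Y]=0$, one gets $\mathbb{E}[e^{\lambda Y}]\leq \frac{b}{b-a}e^{\lambda a}-\frac{a}{b-a}e^{\lambda b}=:e^{\varphi(\lambda)}$. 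A direct computation shows $\varphi(0)=\varphi'(0)=0$, and that $\varphi''(\lambda)$ has the form $p(1-p)(b-a)^2$ for some $p=p(\lambda)\in[0,1]$, hence $\varphi''(\lambda)\leq (b-a)^2/4$ for all $\lambda$; Taylor's theorem with remainder then yields $\varphi(\lambda)\leq \lambda^2(b-a)^2/8$.

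Applying this with $Y=X_i-\mathbb{E}[X_i]$, which has mean zero and lies in an interval of length $b_i-a_i$, gives $\Pr[S-\mathbb{E}[S]\geq t]\leq \exp\!\left(-\lambda t+\tfrac{\lambda^2}{8}\sum_{i=1}^n(b_i-a_i)^2\right)$. Finally I would optimize the right-hand side over $\lambda>0$: the exponent is a quadratic in $\lambda$ minimized at $\lambda=4t/\sum_i(b_i-a_i)^2$, where it equals $-2t^2/\sum_i(b_i-a_i)^2$, which completes the one-sided bound and hence the lemma. The only genuinely nontrivial step is Hoeffding's lemma — in particular the uniform bound $\varphi''\leq (b-a)^2/4$; the Chernoff reduction, the factorization, and the optimization over $\lambda$ are all routine.
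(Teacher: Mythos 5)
Your proof is correct and complete; the paper itself does not prove this lemma but simply cites \cite{Hoeffding1963}, and your argument (Chernoff's exponential-moment method, Hoeffding's lemma via convexity and the bound $\varphi''\leq (b-a)^2/4$, then optimizing over $\lambda$) is precisely the standard proof of that cited result. No gaps.
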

\begin{corollary}
    For any $\epsilon,\delta>0$, with $n=\Theta\left(\frac {\log \frac 1 \delta}{\epsilon^2}\right)$ i.i.d. copies $X_1,X_2\dots X_n$ of $X$ distributed on $[0,1]$, we have that 
    \[\Pr\left[\left|\frac 1 n\sum_{i=1}^n X_i-E[X]\right|\geq \epsilon\right]\leq \delta.\]
\end{corollary}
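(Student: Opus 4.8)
The plan is to apply Hoeffding's inequality (Lemma \ref{lem_hoeffding}) essentially verbatim. Since the $X_i$ are i.i.d.\ copies of $X$ and $X$ is supported on $[0,1]$, I would take $a_i=0$ and $b_i=1$ for each $i$, so that $\sum_{i=1}^n (b_i-a_i)^2 = n$. Writing $S=\sum_{i=1}^n X_i$, linearity of expectation gives $\mathbb{E}[S]=n\,\mathbb{E}[X]$, and the event $\bigl|\tfrac1n\sum_i X_i-\mathbb{E}[X]\bigr|\ge\epsilon$ coincides with $|S-\mathbb{E}[S]|\ge n\epsilon$.

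Invoking Lemma \ref{lem_hoeffding} with $t=n\epsilon$ then gives
\[
\Pr\!\left[\Bigl|\tfrac1n\sum_{i=1}^n X_i-\mathbb{E}[X]\Bigr|\ge\epsilon\right]\ \le\ 2\exp\!\left(-\frac{2(n\epsilon)^2}{n}\right)\ =\ 2\exp\bigl(-2n\epsilon^2\bigr).
\]
The remaining step is to pick $n$ making the right-hand side at most $\delta$: solving $2\exp(-2n\epsilon^2)\le\delta$ yields $n\ge \frac{1}{2\epsilon^2}\ln\frac2\delta$, so any $n$ of this order works. To match the stated $\Theta\bigl(\frac{\log(1/\delta)}{\epsilon^2}\bigr)$ I would note that $\ln\frac2\delta=\Theta\bigl(\log\frac1\delta\bigr)$ in the regime where the claim is non-vacuous (for $\delta$ close to or above $1$ the probability is trivially at most $1\le\delta$, or a constant number of samples suffices), so a sufficiently large hidden constant gives the result.

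There is no real obstacle here — the only points requiring mild care are the direction of the $\Theta$ (we need the \emph{upper} bound on the required sample size, i.e.\ that $O\bigl(\frac{\log(1/\delta)}{\epsilon^2}\bigr)$ samples suffice) and the passage between $\ln(2/\delta)$ and $\log(1/\delta)$; one could sidestep the latter entirely by instead stating the explicit bound $n=\bigl\lceil\frac{1}{2\epsilon^2}\ln\frac2\delta\bigr\rceil$.
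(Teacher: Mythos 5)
Your proposal is correct and follows exactly the same route as the paper's proof: apply Hoeffding's inequality with $a_i=0$, $b_i=1$, take $t=n\epsilon$ to get the bound $2\exp(-2n\epsilon^2)$, and choose $n=\Theta\bigl(\frac{\log(1/\delta)}{\epsilon^2}\bigr)$ to make this at most $\delta$. Your extra remarks about the direction of the $\Theta$ and the $\ln(2/\delta)$ versus $\log(1/\delta)$ distinction are sensible but not points the paper dwells on.
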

\begin{proof}
    Via a direct application of Lemma \ref{lem_hoeffding}, we get
    \[\Pr\left[\left|\sum_{i=1}^n X_i-nE[X]\right|\geq t\right]\leq 2\exp\left(-\frac {2t^2}{n}\right)\]
    Taking $t=n\epsilon$, and dividing by $n$ inside $\Pr$, we get
    \[\Pr\left[\left|\frac 1 n\sum_{i=1}^n X_i-E[X]\right|\geq \epsilon\right]\leq 2\exp\left(-2\epsilon^2n\right).\]

    The R.H.S can be made at most $\delta$ with $n=\Theta\left(\frac {\log \frac 1 \delta}{\epsilon^2}\right)$.
\end{proof}
There is also a generalization of Hoeffding's inequality, called Azuma's inequality
\begin{lemma}[Azuma's inequality, rephrased\cite{HDP}]\label{lem_azuma}
Given a sequence of random variables $X_1,X_2\dots X_n$, such that $|X_i|\leq c_i$, and $\mathbb E[X_i|X_{i-1},X_{i-2}\dots X_1]=0$, then we have

\[\Pr\left[\left|\sum_{i=1}^n X_i\right| \geq \epsilon\right]\leq 2 \exp\left(-\frac{\epsilon^2}{2\sum_{i=1}^n c_i^2}\right)\]
\end{lemma}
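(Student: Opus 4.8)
The plan is to prove this by the exponential-moment (Chernoff) method adapted to martingales. Set $S_k = \sum_{i=1}^{k} X_i$; the hypothesis $\mathbb{E}[X_i \mid X_{i-1}, \ldots, X_1] = 0$ says exactly that $(S_k)$ is a martingale with respect to the filtration $\mathcal{F}_k = \sigma(X_1, \ldots, X_k)$. I would first establish the one-sided bound $\Pr[S_n \geq \epsilon] \leq \exp\!\left(-\epsilon^2 / (2\sum_i c_i^2)\right)$, and then apply the same bound to the sequence $-X_1, \ldots, -X_n$, which satisfies the identical hypotheses, and combine the two tails with a union bound to produce the factor of $2$.

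For the one-sided estimate, fix $\lambda > 0$ and apply Markov's inequality to the nonnegative variable $\exp(\lambda S_n)$:
\[\Pr[S_n \geq \epsilon] = \Pr[\exp(\lambda S_n) \geq \exp(\lambda \epsilon)] \leq \exp(-\lambda \epsilon)\, \mathbb{E}[\exp(\lambda S_n)].\]
The heart of the argument is to control $\mathbb{E}[\exp(\lambda S_n)]$ by peeling off one term at a time: using the tower property and measurability of $S_{n-1}$ with respect to $\mathcal{F}_{n-1}$,
\[\mathbb{E}[\exp(\lambda S_n)] = \mathbb{E}\!\left[\exp(\lambda S_{n-1})\, \mathbb{E}[\exp(\lambda X_n) \mid \mathcal{F}_{n-1}]\right].\]
Since the conditional law of $X_n$ given $\mathcal{F}_{n-1}$ has mean $0$ and is supported in $[-c_n, c_n]$, the conditional Hoeffding lemma gives $\mathbb{E}[\exp(\lambda X_n) \mid \mathcal{F}_{n-1}] \leq \exp\!\left(\lambda^2 (2c_n)^2 / 8\right) = \exp(\lambda^2 c_n^2 / 2)$ pointwise. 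Plugging this in and iterating from $i = n$ down to $i = 1$ yields $\mathbb{E}[\exp(\lambda S_n)] \leq \exp\!\left(\tfrac{1}{2}\lambda^2 \sum_i c_i^2\right)$, hence $\Pr[S_n \geq \epsilon] \leq \exp\!\left(-\lambda \epsilon + \tfrac{1}{2}\lambda^2 \sum_i c_i^2\right)$. Optimizing the quadratic exponent with $\lambda = \epsilon / \sum_i c_i^2$ gives the claimed bound; doubling for the two-sided version finishes the proof.

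The only ingredient beyond routine bookkeeping is Hoeffding's lemma: for a random variable $Y$ with $\mathbb{E}[Y] = 0$ and $a \leq Y \leq b$ one has $\mathbb{E}[\exp(\lambda Y)] \leq \exp\!\left(\lambda^2 (b-a)^2 / 8\right)$. I would prove it by convexity of $t \mapsto e^{\lambda t}$ on $[a,b]$, namely $e^{\lambda y} \leq \frac{b-y}{b-a} e^{\lambda a} + \frac{y-a}{b-a} e^{\lambda b}$; taking expectations kills the linear term, and bounding the logarithm of the resulting function of $\lambda$ by a second-order Taylor estimate gives the Gaussian factor. Because the paper already cites \cite{HDP} for these tail bounds, an acceptable alternative is to cite Hoeffding's lemma from there and present only the martingale iteration. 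The one subtlety I would flag explicitly is that every conditional expectation must be taken with respect to the same filtration $\mathcal{F}_k = \sigma(X_1, \ldots, X_k)$, so that the tower property applies cleanly at each peeling step; this is exactly where the "$\mathbb{E}[X_i \mid X_{i-1}, \ldots, X_1] = 0$" hypothesis (rather than mere pairwise or unconditional mean-zero) is used, and it is the main point one should not gloss over.
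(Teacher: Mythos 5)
Your proof is correct: the Chernoff--Markov bound on $\exp(\lambda S_n)$, the conditional Hoeffding lemma applied after peeling with the tower property, the optimization $\lambda = \epsilon/\sum_i c_i^2$, and the union bound for the two-sided statement all check out and yield exactly the claimed constant. The paper itself states this lemma as a cited result from \cite{HDP} without proof, and your argument is precisely the standard proof given there, so nothing further is needed.
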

\begin{definition}
    The \emph{Sub-Gaussian norm} of a distribution $X$ on $\mathbb R$ is defined as follows:

\[\norm{X}_{\psi_2}=\inf \{t:\mathbb E[X^2{t^{-2}}]\leq 2\}.\]
\end{definition} 

Any distribution with a finite Sub-Gaussian norm is said to be Sub-Gaussian. For concentration inequalities, the Sub-Gaussian norm being small implies tighter concentrations.

\begin{lemma}[Sums of independent Sub-Gaussians \cite{HDP}]\label{subg_sum}
    Given $n$ independent Sub-Gaussian random variables $X_1\dots X_n$, with mean 0, then we have that for some absolute constant $C$(independent of $n$),

    \[\norm{\sum_{i=1}^n X_i}^2_{\psi_2}\leq C\sum_{i=1}^n\norm{X_i}_{\psi_2}^2.\]
\end{lemma}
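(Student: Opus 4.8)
The plan is to pass to the moment generating function, since it converts independence into a product and hence the sub-Gaussian parameter into an additive quantity. Following the standard treatment in \cite[\S2.6]{HDP}, the argument has three stages tied together by the equivalence between sub-Gaussian tails, moment growth, and MGF control; the only two facts that do any real work are the mean-zero hypothesis and independence.

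First I would establish the centered direction of the MGF equivalence: if $\mathbb E[X]=0$ and $\norm{X}_{\psi_2}\le K$, then $\mathbb E[\exp(\lambda X)]\le\exp(C_1\lambda^2K^2)$ for \emph{every} $\lambda\in\mathbb R$, where $C_1$ is an absolute constant. This follows by chaining elementary steps: the defining Orlicz bound $\mathbb E[\exp(X^2/K^2)]\le 2$ gives the tail estimate $\Pr[\abs{X}\ge t]\le 2\exp(-t^2/K^2)$ via Markov's inequality applied to $\exp(X^2/K^2)$; integrating the tail gives moment growth $\mathbb E[\abs{X}^p]\le(CK\sqrt p)^p$; and Taylor-expanding, $\mathbb E[\exp(\lambda X)]=1+\sum_{p\ge2}\lambda^p\mathbb E[X^p]/p!$ since $\mathbb E[X]=0$ removes the linear term, and the surviving series contributes $\Order(\lambda^2K^2)$ to the exponent (handling small and large $\abs{\lambda}$ separately and using $p!\ge(p/e)^p$ for the tail of the series).

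Second, write $S=\sum_{i=1}^nX_i$ and $\sigma^2\eqdef\sum_{i=1}^n\norm{X_i}_{\psi_2}^2$. Independence gives $\mathbb E[\exp(\lambda S)]=\prod_{i=1}^n\mathbb E[\exp(\lambda X_i)]\le\prod_{i=1}^n\exp(C_1\lambda^2\norm{X_i}_{\psi_2}^2)=\exp(C_1\lambda^2\sigma^2)$ for all $\lambda$, the key point being that this depends on the individual $X_i$ only through $\sigma^2$ and that $C_1$ does not degrade with $n$. Then I would run the equivalence in reverse: optimizing a Chernoff bound over $\lambda$ turns $\mathbb E[\exp(\lambda S)]\le\exp(C_1\lambda^2\sigma^2)$ into $\Pr[\abs{S}\ge t]\le 2\exp(-t^2/(4C_1\sigma^2))$, and feeding this into the layer-cake identity $\mathbb E[\exp(S^2/t^2)]=1+\int_0^\infty(2s/t^2)\exp(s^2/t^2)\Pr[\abs{S}\ge s]\,ds$ shows the right-hand side is at most $2$ as soon as $t^2\ge C_2\sigma^2$ for a suitable absolute constant $C_2$ (the integral converges precisely when $t^2>4C_1\sigma^2$). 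Hence $\norm{S}_{\psi_2}^2\le C_2\sigma^2$, which is the claim with $C=C_2$.

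I expect the main obstacle to be not any single step — each is a short calculation — but the careful bookkeeping of absolute constants along the round trip $\psi_2$-norm $\to$ tail $\to$ moments $\to$ MGF $\to$ tail $\to$ $\psi_2$-norm, making sure none of them silently acquires an $n$-dependence. The two genuinely essential inputs are exactly the ones flagged above: centering is what makes the MGF bound two-sided and valid for all $\lambda$ (without it one only controls the MGF of $X-\mathbb E[X]$), and independence is what produces the product formula and therefore the additivity of $\sigma^2$.
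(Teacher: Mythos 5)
Your argument is correct: the paper does not prove this lemma itself but cites it from \cite{HDP}, and your MGF-based route (centered $\psi_2$ bound $\Rightarrow$ $\mathbb E[\exp(\lambda X_i)]\le\exp(C_1\lambda^2\norm{X_i}_{\psi_2}^2)$ for all $\lambda$, multiply by independence, Chernoff back to a tail bound, integrate back to the Orlicz condition) is exactly the standard proof of Proposition~2.6.1 there. The constant bookkeeping you flag is the only delicate point, and your chain keeps every constant absolute, so the proof is complete as sketched.
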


 The other technical lemmas we will use are:
 \begin{lemma}\label{lipschitz}
        $\arcsin(\sqrt x)$ is $ \frac 2 {\sqrt 3}$-Lipschitz on $(\frac 1 4, \frac 3 4)$.
    \end{lemma}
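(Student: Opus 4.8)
The plan is to reduce the Lipschitz claim to a pointwise bound on the derivative via the mean value theorem. Set $f(x) \eqdef \arcsin(\sqrt{x})$, which is differentiable on $(0,1)$. First I would compute $f'$ by the chain rule: since $\frac{d}{du}\arcsin(u) = \frac{1}{\sqrt{1-u^2}}$ and $\frac{d}{dx}\sqrt{x} = \frac{1}{2\sqrt{x}}$, we get
\[
f'(x) = \frac{1}{\sqrt{1-x}}\cdot\frac{1}{2\sqrt{x}} = \frac{1}{2\sqrt{x(1-x)}}.
\]

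Next I would control the denominator on the interval of interest. The function $g(x) \eqdef x(1-x)$ is concave (a downward parabola with vertex at $x = \tfrac12$), so on the closed interval $[\tfrac14,\tfrac34]$ it attains its minimum at an endpoint; both endpoints give $g(\tfrac14) = g(\tfrac34) = \tfrac14\cdot\tfrac34 = \tfrac{3}{16}$. Hence $x(1-x) \ge \tfrac{3}{16}$ on $[\tfrac14,\tfrac34]$, which yields
\[
f'(x) = \frac{1}{2\sqrt{x(1-x)}} \le \frac{1}{2\sqrt{3/16}} = \frac{1}{2\cdot \frac{\sqrt 3}{4}} = \frac{2}{\sqrt 3}.
\]

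Finally, since $f'$ is nonnegative and bounded above by $\tfrac{2}{\sqrt 3}$ on $[\tfrac14,\tfrac34]$, the mean value theorem gives, for any $x,y$ in the interval, $|f(x)-f(y)| = |f'(\xi)|\,|x-y| \le \tfrac{2}{\sqrt 3}|x-y|$ for some $\xi$ between them, which is exactly the claimed Lipschitz bound. There is no real obstacle here; the only thing to be slightly careful about is that the bound on $g$ should be taken over the closed interval (or noted as a supremum over the open one) so that the constant $\tfrac{2}{\sqrt 3}$ is genuinely attained as a bound rather than merely approached.
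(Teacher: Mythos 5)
Your proof is correct and follows essentially the same route as the paper: compute $f'(x) = \frac{1}{2\sqrt{x(1-x)}}$ and bound it by $\frac{2}{\sqrt{3}}$ on the interval, which the paper does in one line. Your version just spells out the endpoint minimization of $x(1-x)$ and the mean value theorem step, which the paper leaves implicit.
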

    \begin{proof}
        The derivative of $\arcsin(\sqrt x)$ is $\frac 1 {2\sqrt {x(1-x)}} $, which is at most $\frac 2 {\sqrt 3}$ on the given interval.
    \end{proof}

    \begin{lemma}\label{lem_conjugate_bound}
    For hermitian operators $A,B$ with $\norm{A}\leq 1$, 
        \[\norm{e^{iA}Be^{-iA}-B-i [A,B]}\leq \norm{[A[A,B]} \cdot\frac{e^2-3}4 \]
    \end{lemma}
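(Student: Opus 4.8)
The plan is to use the Hadamard (conjugation) expansion $e^{iA} B e^{-iA} = \sum_{k \geq 0} \frac{i^k}{k!}\, \mathrm{ad}_A^k(B)$, where $\mathrm{ad}_A(X) \eqdef [A,X]$, and then to control the tail $k \geq 2$ using only the crude estimate $\norm{[A,X]} \leq 2\norm{A}\norm{X} \leq 2\norm{X}$, applied once for each commutator beyond the second. First I would justify the expansion: the map $f(t) \eqdef e^{itA} B e^{-itA}$ and the power series $g(t) \eqdef \sum_{k \geq 0} \frac{(it)^k}{k!}\mathrm{ad}_A^k(B)$ both satisfy the linear operator ODE $h'(t) = i[A,h(t)]$ with $h(0) = B$; the series converges in operator norm and may be differentiated termwise because $\norm{\mathrm{ad}_A^k(B)} \leq (2\norm{A})^k \norm{B} \leq 2^k \norm{B}$, so by uniqueness of solutions $f = g$, and in particular $f(1) = \sum_{k \geq 0} \frac{i^k}{k!}\mathrm{ad}_A^k(B)$. (Equivalently, one checks by induction from $f' = i[A,f]$ that $f^{(k)}(0) = i^k\mathrm{ad}_A^k(B)$, and $f$ is entire as a composition of entire functions.)

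Subtracting the $k=0$ and $k=1$ terms, which are exactly $B$ and $i[A,B]$, leaves
\[ e^{iA} B e^{-iA} - B - i[A,B] \;=\; \sum_{k \geq 2} \frac{i^k}{k!}\,\mathrm{ad}_A^k(B). \]
For each $k \geq 2$ I would write $\mathrm{ad}_A^k(B) = \mathrm{ad}_A^{k-2}\big([A,[A,B]]\big)$ and apply $\norm{\mathrm{ad}_A(X)} \leq 2\norm{A}\norm{X} \leq 2\norm{X}$ exactly $k-2$ times, obtaining $\norm{\mathrm{ad}_A^k(B)} \leq 2^{k-2}\norm{[A,[A,B]]}$. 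The triangle inequality then gives
\[ \norm{ e^{iA} B e^{-iA} - B - i[A,B] } \;\leq\; \norm{[A,[A,B]]} \cdot \frac14 \sum_{k \geq 2} \frac{2^k}{k!} \;=\; \norm{[A,[A,B]]}\cdot\frac{e^2 - 3}{4}, \]
using $\sum_{k \geq 0} 2^k/k! = e^2$. This is exactly the claimed inequality (the statement's ``$\norm{[A[A,B]}$'' should read $\norm{[A,[A,B]]}$).

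There is no substantial obstacle here; the only points needing care are the justification that the Hadamard series converges in norm and represents $e^{iA}Be^{-iA}$ — handled by the a priori geometric bound above — and the discipline of keeping the $k=0,1$ terms exact while bounding everything from the second commutator onward by the doubling estimate, since bounding $\norm{\mathrm{ad}_A^k(B)}$ by $2^k\norm{B}$ from the start would lose the dependence on the iterated commutator. As a remark, a slicker route with an even better constant is Taylor's theorem with integral remainder applied to $f$: since $f''(t) = -\,e^{itA}[A,[A,B]]e^{-itA}$ and conjugation by a unitary preserves the operator norm, $\norm{f(1) - f(0) - f'(0)} \leq \int_0^1 (1-t)\,\norm{f''(t)}\,dt = \tfrac12 \norm{[A,[A,B]]}$, which already implies the stated bound because $\tfrac12 \leq \tfrac{e^2-3}{4}$; but the power-series argument has the advantage of reproducing the exact constant in the lemma.
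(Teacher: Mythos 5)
Your proof is correct and follows essentially the same route as the paper: expand $e^{iA}Be^{-iA}$ via the Campbell/Hadamard identity, keep the $k=0,1$ terms exact, and bound each remaining term by $2^{k-2}\norm{[A,[A,B]]}/k!$ using $\norm{[A,X]}\le 2\norm{A}\norm{X}$, summing to $(e^2-3)/4$. The extra justification of the series and the Taylor-remainder remark are fine additions but not departures from the paper's argument.
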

    \begin{proof}
        This is proved using a result proved by Campbell, called the Campbell identity \cite{Hall_2016}:
        \[e^{A}Be^{-A}=B+[A,B]+\frac 1 2 [A[A,B]]+ \frac 1 {3!} [A[A[A,B]]]+\dots \frac 1 {j!}[A[A\dots A,[A,B]]\dots]]+\dots\]

        where the $j\ith$ has $j$ composed commutators. Using this, and $\norm{[X,Y]}\leq 2\norm{X}\norm{Y}$, we get that
    \begin{align*}
        \norm{e^{iA}Be^{-iA}-B-i [A,B]}&=\norm { \frac 1 {2!} [iA[iA,B]]\frac 1 {3!} [iA[iA[iA,B]]]+\dots \frac 1 {i!}[iA[iA\dots iA,[iA,B]]\dots]]}    \\
        &\leq \sum_{j=2}^\infty \frac 1 {j!} \norm{[iA[iA\dots iA,B]\dots]}\\
        &\leq  \norm{[A[A,B]]}\sum_{j=2}^\infty \frac {2^{j-2}} {j!}\\
        &=\norm{[A[A,B]]}\cdot\frac{e^2-2-1}4  
        \end{align*}
        
    \end{proof}
    \begin{lemma}\label{lem_cos_bound}
        For $x\in [-\pi/2,\pi/ 2]$

        \[\cos(x)\leq 1-\frac {x^2} 2 +\frac {x^2}{24}\leq 1-\frac{x^4}{4}\]
    \end{lemma}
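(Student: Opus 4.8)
I read the displayed chain as $\cos x \le 1 - \tfrac{x^2}{2} + \tfrac{x^4}{24} \le 1 - \tfrac{x^2}{4}$ on $[-\pi/2,\pi/2]$: as literally printed (with $x^2$ in the middle term and $x^4$ on the right) the statement fails near $x=\pi/2$, since $\cos(\pi/2)=0$, so these appear to be typos. Both inequalities in the corrected chain are elementary, so the plan is short. For the left inequality the plan is to use the Taylor series $\cos x = \sum_{k\ge 0}(-1)^k\tfrac{x^{2k}}{(2k)!}$ and the fact that it is alternating with eventually-decreasing terms; for the right one, to rearrange and use $x^2\le\pi^2/4$.

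For the first inequality I would write $\cos x = \bigl(1 - \tfrac{x^2}{2} + \tfrac{x^4}{24}\bigr) - R$ with $R = \tfrac{x^6}{6!} - \tfrac{x^8}{8!} + \tfrac{x^{10}}{10!} - \cdots = \sum_{k\ge 3}(-1)^{k-1}\tfrac{x^{2k}}{(2k)!}$, and show $R\ge 0$. Grouping the tail into consecutive pairs, a typical pair equals $\tfrac{x^{2k}}{(2k)!}\bigl(1 - \tfrac{x^2}{(2k+1)(2k+2)}\bigr)\ge 0$ for every $k\ge 3$, because $x^2\le\pi^2/4<12\le(2k+1)(2k+2)$; since the series converges and its terms tend to $0$, $R$ equals the sum of these nonnegative pairs, so $R\ge 0$ and $\cos x \le 1 - \tfrac{x^2}{2} + \tfrac{x^4}{24}$. (Equivalently, one can invoke the standard alternating-series estimate applied to the tail beginning at the $\tfrac{x^4}{24}$ term, which is legitimate once the cosine-series terms decrease from that index on, i.e. once $x^2<12$.)

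For the second inequality, $1 - \tfrac{x^2}{2} + \tfrac{x^4}{24} \le 1 - \tfrac{x^2}{4}$ reduces, after cancellation and (for $x\ne0$) division by $x^2$, to $\tfrac{x^2}{24}\le\tfrac14$, i.e. $x^2\le 6$, which holds since $x^2\le\pi^2/4<3$; the case $x=0$ is trivial. Chaining the two gives the lemma.

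I expect no real obstacle here: the only point needing care is fixing the sign of the Taylor remainder $R$, which reduces to checking that the cosine series has decreasing terms past the relevant index on $[-\pi/2,\pi/2]$, i.e. $x^2<12$. Should the intended right-hand side be a slightly different expression (e.g.\ $1-x^4/24$), the argument is identical except that the trivial closing inequality $x^2\le 6$ is replaced by its analogue.
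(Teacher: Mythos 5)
Your reading of the statement is right: the paper's own proof confirms both typos, since it rewrites the middle expression as $1-\frac{x^2}{2}\left(1-\frac{x^2}{12}\right)=1-\frac{x^2}{2}+\frac{x^4}{24}$ and ends at $1-\frac{x^2}{4}$ on the right (which is also the form invoked later in the sub-Gaussian-norm bound, where $\cos^{4p}$ is bounded by $(1-\tfrac14(\cdot)^2)^{4p}$). Your proof is correct, and for the first inequality it takes a genuinely different route: you control the Taylor remainder $R=\sum_{k\ge 3}(-1)^{k-1}x^{2k}/(2k)!$ by pairing consecutive terms and using $x^2\le\pi^2/4<(2k+1)(2k+2)$, whereas the paper differentiates five times, observes that the fifth derivative of $\cos x$ is $-\sin x\le 0$ on $[0,\pi/2]$ while that of the quartic polynomial is $0$, and integrates back using that the first four derivatives agree at $0$ (after reducing to $x\ge 0$ by evenness). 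The two arguments are of comparable length; yours makes the role of the hypothesis (only $x^2<12$ is needed) explicit and generalizes immediately to higher-order truncations of the cosine series, while the paper's avoids any series manipulation or convergence/regrouping considerations. For the second inequality the two proofs are essentially the same one-line rearrangement, yours reducing to $x^2\le 6$ and the paper's to $\frac{x^2}{12}\le\frac{10}{48}$.
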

    \begin{proof}
        WLOG suppose $x\in [0,\pi/2]$, as $\cos(x)$ is an even function.

        Then, the first inequality follows by differentiating 5  times. The L.H.S becomes $-\sin(x)\leq 0$ on $[0,\pi /2]$, while the RHS is 0. The first 4 derivatives at 0 match up.

        For the second inequality, we rewrite the middle expression as $1-\frac {x^2}{2} \cdot \left(1-\frac {x^2}{12}\right)$, and we have that $\frac{x^2}{12}\leq \frac {\pi^2}{48}\leq \frac {10}{48}$, so we get that 
        \[ 1-\frac {x^2}{2} \cdot \left(1-\frac {x^2}{12}\right)\leq 1-\frac {x^2}{2}\cdot \frac{38}{48}\leq 1-\frac {x^2}{4}\]
    \end{proof}
    \begin{lemma}\label{lem_e-x_bound}
    For $x\geq 0$, we have
    \[e^{-x}\leq 1-x+\frac {x^2}{2}\]
        
    \end{lemma}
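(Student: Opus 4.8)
The plan is to reduce the inequality to the sign of a single real function and its first two derivatives. Define $f(x) \eqdef 1 - x + \tfrac{x^2}{2} - e^{-x}$ on $[0,\infty)$, so that the claim is exactly $f(x)\ge 0$ there. First I would record $f(0)=0$. Next, differentiating, $f'(x) = -1 + x + e^{-x}$, so $f'(0)=0$, and differentiating once more, $f''(x) = 1 - e^{-x}$, which is nonnegative for all $x\ge 0$ because $e^{-x}\le 1$ on that range. Consequently $f'$ is nondecreasing on $[0,\infty)$, and since it starts at $f'(0)=0$ it remains nonnegative; hence $f$ itself is nondecreasing on $[0,\infty)$, and since it starts at $f(0)=0$ it remains nonnegative. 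This yields the stated bound.

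An equivalent and equally short route is to invoke Taylor's theorem with the integral form of the remainder about $x=0$: $e^{-x} = 1 - x + \tfrac{x^2}{2} - \int_0^x \tfrac{(x-t)^2}{2}\, e^{-t}\, dt$, where for $x\ge 0$ the integrand is nonnegative on $[0,x]$, so the subtracted term is nonnegative and the inequality follows immediately. Either argument is entirely routine, so there is no genuine obstacle; the only point worth a moment's care is that one should \emph{not} try to justify the bound by the ``alternating series with decreasing magnitudes'' heuristic, since the terms $x^k/k!$ are not eventually decreasing for $x>1$ — using monotonicity of the derivatives (or the integral remainder) sidesteps this cleanly.
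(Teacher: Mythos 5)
Your proof is correct and follows essentially the same strategy as the paper's: differentiate until the inequality becomes evident, then integrate back using the matching values at $x=0$. The only cosmetic difference is that you stop at the second derivative (where $1-e^{-x}\ge 0$ is already clear) while the paper takes a third derivative; both arguments are valid.
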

    \begin{proof}
        Take 3 derivatives on both sides. The LHS becomes $-e^{-x}\leq 0$, and the RHS becomes $0$. For the first 2 derivatives, both LHS and RHS agree on 0.
    \end{proof}

\section{The Algorithm}\label{sec_the_algorithm}

Our algorithm is a sequence of estimations for each provided measurement. All estimations make use of all the copies of the given state $\rho$, and the individual procedures are non-adaptive, i.e., no classical information from the previous estimation steps is used in the subsequent steps.

\subsection{Main Idea} \label{intuition}

To better understand the algorithm, we try to provide some intuition. We consider the kind of noisy measurements used in \cite{aaronsonrothblum2019,badescu}, but instead of doing a threshold check, we try to measure out the noisy estimate directly. The usual way of thinking about this is to compute the sample mean coherently in an ancilla register, perform a weak measurement (corresponding to the exact noise one wants), and then uncompute the ancilla register. However, if the noise is additive, i.e., if the sample mean is distributed as a random variable $A$, we only want to sample from the distribution $A+X$ for some independent random variable $X$ corresponding to the noise, we can think of an alternate implementation. What follows is informal, and we do not present a formalization of this argument as we do not directly use it.

Instead of ancilla qubits, consider having access to a Quantum system akin to a particle in 1 dimension, i.e, the computational basis states of this system are $\ket x$ for $x\in \mathbb R$, and a quantum state in this register is of the form
\[\int_{-\infty}^{\infty} a(x)\ket x,\] where $a:\mathbb R\mapsto \mathbb C$ is a function with unit $L_2$ norm, i.e,

\[\int_{-\infty}^{\infty} a(x)\overline {a(x)}=1.\]
Now, corresponding to the noise $X$, initialize this register in the state
\begin{equation}
    \int_{-\infty}^{\infty} \sqrt{p_X(x)}\ket x\label{init_state_X},
\end{equation}

where $p_X$ is the p.d.f for $X$. Notice that if we directly measure this in the computational basis, the outcome is distributed as $X$. Next, let $M$ be the observable we want to measure with a noise. In the case of mean estimation, $M$ will be the observable corresponding to the sample mean $\frac 1 n \sum_{i=1}^n P^{(i)}$, where $P$ is the projector whose mean we wish to estimate. Next, let $T$ be the operator such that

\[\exp(\ii aT)\ket x=\ket{a+x}\]
for each $a,x\in \mathbb R$ (this is similar to the momentum operator in Quantum Mechanics). With this defined, we can sample from $A+X$ from the given copies of $\rho$ by first applying the following unitary operator
\[\exp(\ii M\otimes T),\]

(where the first tensor factor corresponds to the register containing the copies of $\rho$, and the second factor corresponds to the ancilla register), followed by measurement of the ancilla in the computational basis. This gives us a sample from $A+X$, because in the $\lambda$ eigenspace of $M$ (for some eigenvalue $\lambda$), the aforementioned unitary operator has action $\exp(\ii\lambda I\otimes T)=I\otimes \exp(\ii\lambda T)$, which simply translates the second register by $\lambda$.

For shadow tomography, we are interested in reusing the copies of $\rho$, so want to get an idea of what the resultant state looks like. Now, if the further procedure does not depend on the outcome observed in the ancilla register, for the analysis we can pretend that we did not see the outcome of the measurement, and that we have only traced out the register, which is further equivalent to measurement in any basis. In particular, we can choose the eigenbasis of the $T$ operator itself. What does this eigenbasis look like? It can be verified that it is indeed the Fourier basis, as the vectors

\[\int_{-\infty}^{\infty}\exp(\ii bx)\ket x\]

are easily seen to be $\exp(\ii ab)$ eigenvectors of $\exp(\ii aT)$, and thus are eigenvectors of $T$ with eigenvalue $b$. Now, since the applied unitary operator commutes with measurement in this basis, we can interchange the order of measurement and the unitary operation. So, we first measure the ancilla in the Fourier basis, and if the outcome obtained is $\lambda$, then the action of the unitary operator is the same as $\exp(\ii M\otimes \lambda I)=\exp(\ii \lambda M)\otimes I$, which can be thought of as a generalized form of phase kickback. Onto the first register, if $\lambda$ is small, the action is approximated by $I+\ii a\lambda M$. Also, the distribution of $\lambda$ depends only on the input state of the ancilla register, i.e, it only depends on the distribution $X$. So, if $\lambda$ is distributed symmetrically, then this action on average too is close to $I$. 

All of these arguments were only for a single estimation, and for multiple estimations, these phase kickbacks compose. The cumulative action of these phase kickbacks can be written, approximately, as $I+\ii \sum \lambda_i M_i$. Now, since all $M_i$'s are different, it is highly likely that with the observed $\lambda_{1}\dots \lambda_i$, the above action causes a large deviation in some direction, but if we look at the perturbation in any shadow of $\rho$, say $\mathbb E_\rho[M]$, then the perturbation approximately has the form $\sum \lambda_i c_i$ for some constants $c_i$ (each depending on $\rho,M$ and $M_i$). If each $\lambda_i$ has mean 0, we expect this quantity to be concentrated near $0$, and thus the shadows of $\rho$ will not be perturbed much.

In this setting, a good candidate for $X$ would be such that both the vector in Equation \ref{init_state_X} and its Fourier transform are close to 0 (i.e, the amplitudes are negligible for $\ket{x}$ with $|x|>>0$): the former because we want the estimation to be accurate, and the latter so that the phase kickbacks generated are not too large.  We also know that Heisenberg's Uncertainty principle puts a limit to this, as the product of the variances obtained would be $\Omega(1)$. Finally, we also know that the states 
\[\int_{-\infty}^\infty \exp\left(-\frac{x^2}{4\sigma^2}\right)\ket x\]

(after normalization), do indeed have the property that the product of the corresponding variances (in computational and fourier basis) is a fixed constant. This is because the Fourier transform of the standard Gaussian distribution is itself.

Since this ancilla register is infinite dimensional, we clearly cannot use it in an actual algorithm. We instead work with a collection of qubits, and the $\exp(\ii M\otimes T)$ operator here is replaced with $\exp(\ii M\otimes X)$  operator, where $X$ is the Pauli $X$ operator, which is then repeated for each extra qubit. Instead of translation, this applies a rotation onto the qubit, with angle proportional to the corresponding eigenvalue. In Section \ref{sec_low_mem} in the appendix, we discuss another algorithm that follows the idea discussed in this section much more closely.

\subsection{Estimation for $m=1$} \label{sec_single}

We first describe the procedure for a single $M$.

Given $n$ copies of state $\rho$, we use additional $k$ ancilla qubits ($k$ to be determined later). 

Each of the $k$ qubits is initialized in the state $\ket{\psi_0}= R_X\left(\frac 1 3 \pi\right)\ket 0=\frac {\sqrt{3}\ket 0 - \ii\ket 1} { 2} $.

Then, we apply the following unitary operator $n\cdot k$ times: once per pair of copy of state $\rho$ and ancilla qubit
\begin{equation}
    \exp\left(-\ii\frac \pi {6n}M\otimes  X\right)\label{M_rot}.
\end{equation}
Here, $X$ refers to the Pauli $X$ gate.
One can think of these as ``$M$-controlled'' rotations: if $M$ is diagonalized as $\sum_i \alpha_i\ketbra{\psi_i}{\psi_i}$, then the above acts as
\[\sum \ketbra{\psi_i}{\psi_i}\otimes R_X\left(\frac{\alpha_i \pi}{3n}\right).\]

The combined action of all the $n\cdot k$ unitary operators together is:

\begin{equation}
     \exp\left(-\ii\frac {\pi}{6n}\left(\sum_{i=1}^n M_{(i)}\right) \otimes\left(\sum_{i=1}^k X_{(i)}\right)\right)\label{u_all},
\end{equation}

where $M_{(i)}$ refers to the POVM element that is a tensor-product of $(n-1)$ $I_d$'s  and a single $M$, where the $M$ is the $i\ith$  tensor factor. Similarly $X_{(i)}$ is refers to a tensor-product of $(k-1)$ $I_2$'s and a single $X$, with $X$ being the $i\ith$ multiplicand.

After this, we measure out all the $k$ qubits in the computational basis. Suppose the output has $\mu$ fraction of 1's. Then, the estimate we output for $\mathbb E_{\rho}[M]$ is $\frac 6 \pi \cdot \arcsin (\sqrt \mu )-1$. 

Before moving on to the technical lemma, we can think of these $k$ qubits storing a noisy encoding of the estimated mean, and the final measurement is effectively a noisy measurement of the sample mean.

\begin{lemma}\label{lem_single_estimation}
    The above algorithm estimates $\mathbb E_\rho[M]$ up to $\epsilon\leq \epsilon_0$ additive error (where $\epsilon_0$ is some small enough absolute constant), with probability at most $\delta$ for $n\geq 10k$ and $k= \Theta\left(\frac {\log \frac 1 {\delta}}{\epsilon^2}\right)$ (where the hidden constant factor is large enough).    
\end{lemma}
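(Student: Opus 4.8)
The plan is to reduce the quantum procedure to an equivalent purely classical two-stage sampling process and then finish with the concentration and Lipschitz lemmas from the preliminaries. Throughout write $\nu\eqdef\mathbb E_\rho[M]\in[0,1]$ and let $\mu$ denote the empirical fraction of $1$'s in the final measurement.

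\textbf{Step 1: the post-unitary ancilla state.} First I would check that the $nk$ controlled rotations pairwise commute, so that their product is indeed the single exponential \eqref{u_all}. Then, diagonalizing $A\eqdef\sum_{i=1}^n M_{(i)}=\sum_a a\,\Pi_a$ (its eigenvalues $a$ being sums of $n$ eigenvalues of $M$, so $a/n\in[0,1]$), the operator \eqref{u_all} block-diagonalizes as $\sum_a \Pi_a\otimes R_X\!\bigl(\tfrac{\pi a}{3n}\bigr)^{\otimes k}$. Conjugating $\rho^{\otimes n}\otimes\ket{\psi_0}\bra{\psi_0}^{\otimes k}$ by this and tracing out the $\rho$-registers, the off-diagonal terms vanish by orthogonality of the $\Pi_a$ together with cyclicity of the trace, so the $k$ ancillas are left in the mixture
\[
\sum_a p_a\,\bigl(R_X(\theta_a)\ket0\bra0 R_X(\theta_a)^\dagger\bigr)^{\otimes k},\qquad \theta_a\eqdef\tfrac{\pi}{3}\bigl(1+\tfrac{a}{n}\bigr),\quad p_a\eqdef\Tr[\Pi_a\rho^{\otimes n}],
\]
using $R_X(\alpha)R_X(\beta)=R_X(\alpha+\beta)$ and $\ket{\psi_0}=R_X(\tfrac{\pi}{3})\ket0$. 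The key observation is that $(p_a)_a$ is exactly the law of $\sum_{i=1}^n\lambda_i$, where the $\lambda_i$ are i.i.d.\ outcomes of measuring the observable $M$ on $\rho$ — each valued in $[0,1]$ with mean $\nu$.

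\textbf{Step 2: the computational-basis readout and the exact identity.} Conditioned on $a$, the $k$ measured qubits are i.i.d.\ $\mathrm{Bernoulli}(q_a)$ with $q_a=\sin^2(\theta_a/2)$; since $\theta_a/2\in[\pi/6,\pi/3]$ we get $q_a\in[1/4,3/4]$ and the exact identity $\tfrac{6}{\pi}\arcsin(\sqrt{q_a})-1=\tfrac{a}{n}$. So the algorithm's output equals the sample mean $a/n$ whenever $\mu=q_a$, and in general the error decomposes via the triangle inequality into (i) $|a/n-\nu|$, bounded by Hoeffding (Lemma \ref{lem_hoeffding}) applied to the $n$ copies, and (ii) $\bigl|\tfrac6\pi\arcsin(\sqrt\mu)-\tfrac6\pi\arcsin(\sqrt{q_a})\bigr|\le\tfrac6\pi\cdot\tfrac2{\sqrt3}\,|\mu-q_a|$, where $|\mu-q_a|$ is in turn bounded by Hoeffding applied to the $k$ i.i.d.\ qubit outcomes and the passage to $\arcsin\!\sqrt{\cdot}$ uses Lemma \ref{lipschitz}. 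Choosing the two deviation targets to be $\epsilon/2$ and $c\epsilon$ for a suitable absolute constant $c$, and taking a union bound, the total failure probability is at most $2e^{-n\epsilon^2/2}+2e^{-2kc^2\epsilon^2}\le\delta$ once $k=\Theta(\log(1/\delta)/\epsilon^2)$ with a large enough constant; the hypothesis $n\ge10k$ is then automatically enough for the first Hoeffding bound as well, so there is no conflict between the two requirements on $n$.

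\textbf{Main obstacle.} The conceptual heart — and the step I expect to require the most care — is Step 1: verifying the commutation so that \eqref{u_all} is valid, and tracking that the cross terms really do drop after the partial trace, leaving a mixture of \emph{product} ancilla states whose $a$-weights coincide with the distribution of $n$ i.i.d.\ observable measurements of $M$; once this structural fact is in hand the remainder is classical probability. A minor secondary nuisance is the boundary of Lemma \ref{lipschitz}: when $\nu$ is near $0$ or $1$, $q_a$ sits near $1/4$ or $3/4$ and $\mu$ may fall just outside $(1/4,3/4)$. This is disposed of either by clipping $\mu$ into $[1/4,3/4]$ before applying $\arcsin\!\sqrt{\cdot}$ (legitimate because $q_a\in[1/4,3/4]$, so clipping only decreases $|\mu-q_a|$), or by noting that for $\epsilon\le\epsilon_0$ with $\epsilon_0$ a small enough absolute constant the high-probability range of $\mu$ stays inside an interval such as $(1/8,7/8)$, on which $\arcsin\!\sqrt{\cdot}$ is still $O(1)$-Lipschitz; in either case $c$, and hence the hidden constant in $k$, is an absolute constant independent of $\epsilon,\delta,\rho$, and $d$.
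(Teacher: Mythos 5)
Your proposal is correct and follows essentially the same route as the paper: the paper likewise inserts a (commuting) measurement of each copy in the eigenbasis of $M$ before the unitary, reducing the ancilla statistics to $k$ i.i.d.\ $\mathrm{Bernoulli}\bigl(\sin^2(\theta/2)\bigr)$ outcomes conditioned on the sample mean $a/n$, and then applies Hoeffding twice together with Lemma \ref{lipschitz}. Your explicit partial-trace computation in Step 1 and your handling of the boundary of the Lipschitz interval are just more detailed versions of what the paper asserts in passing.
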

\begin{proof}
    The correctness only depends on the statistics of the measurement of the $k$ ancilla qubits, so one can trace out the $n$ copies of state $\rho$ for analysis. We can presume that the $n$ copies were measured in the eigenbasis of $M$. Since the unitary operator \ref{M_rot} commutes with this measurement in the first register, we can WLOG assume that that measurement happened before the application of the $n\cdot k$ unitary operators.
    
    Suppose the $i\ith$ copy of state $\rho$ is observed to be in a state with eigenvalue $\lambda_i$ of $M$. Then, the effect of the $n\cdot k$ unitary operators is to apply the unitary operator $R_X(\frac \pi 3\cdot \frac 1 n\sum \lambda_i)$ on each of the $k$ ancilla qubits, which makes the state in each of them $R_X(\frac \pi 3 +\frac \pi 3\cdot \frac 1 n\sum \lambda_i)\ket 0$. Call the argument of $R_X$ as $\theta$.

    Now, by Hoeffding's bound, $|\frac 1 n\sum \lambda_i-\mathbb E[M]|<\epsilon$ with probability at least $1-\delta_1$. Note that $\theta\in \left(\frac \pi 3,\frac {2\pi} 3\right)$.

    On the other hand, we know that for each of the $k$ qubits, upon measurement, the probability of observing $\ket 1$ is $\sin^2{\frac \theta 2}$. So the fraction of 1s obtained $\mu$, with probability at least $1-\delta$, is at most $\epsilon$ far from $\sin ^2{\frac \theta 2}$. So by Lemma $\ref{lipschitz}$, $\arcsin(\sqrt \mu)$ is at most $\frac 2 {\sqrt 3} \epsilon $ far from $\frac \theta 2$, and so the estimate $\frac 6 \pi \cdot \arcsin (\sqrt \mu)-1 $ is at most $\frac {12}{\pi \sqrt 3} \epsilon <3\epsilon $ away from $\frac 1 n \sum \lambda_i$. (Note that here we need that $\epsilon$ is small enough so that Lemma $\ref{lipschitz}$ is applicable).

    Combining the above two results, we get that the estimate is almost $4\epsilon$ away from the true expectation, with probability at least $1-2\delta$. Scaling $\epsilon$ and $\delta$  in the proof appropriately gives us the required result.    
\end{proof}
\subsection{Generalised algorithm for $m$ POVM's}
    The general algorithm is to sequentially use the aforementioned algorithm, reusing all the copies of state $\rho$, but with a fresh of $k$ ancilla qubits. Now, each step does alter the state, but we show that at any step, the sample mean w.r.t any POVM element $M$ is not altered much.

\begin{algorithm}
\label{the_algorithm}
\caption{Algorithm for shadow tomography}
\begin{algorithmic}[1]
    \STATE Register $A\gets \rho^{\otimes n}$
    \STATE $k \gets c_0\frac 1 {\epsilon^2}\log\frac 1 \delta$
    \FOR{$j = 1$ to $m$}
        \STATE Initialize $k$ qubits $B=b_1b_2\dots b_k$ in state $\ket {\psi_0}^{\otimes k}$
        \STATE Apply unitary  $\exp\left(-\ii\frac {\pi}{6n}\left(\sum_{i=1}^n M_{j,(i)}\right) \otimes\left(\sum_{i=1}^k X_{(i)}\right)\right)$ onto registers $AB$.
        \STATE Measure $B$ in computational basis.
        \STATE $\mu\gets$ fraction of 1's in the output
        \STATE Output  $\frac 6 \pi \cdot \arcsin (\sqrt \mu )-1$
    \ENDFOR
\end{algorithmic}
\end{algorithm}
    Before we jump into the analysis, we describe the general idea:
    \begin{enumerate}
        \item We will focus on showing the correctness for one POVM at a time. The way we do this is as follows: for a fixed POVM $M_i$, we see how the estimation for the $j\ith$ POVM element $M_j$ affects the estimate for $M_i$ if we were to directly do it after $M_j$.
        \item In fact, we do something stronger: we keep track of exactly what the resultant state will be after the $j\ith$ estimate, if the $j\ith$ step was done slightly differently, which wouldn't affect the correctness of the remaining estimates.
        \item In this modified setting, the resultant state will turn out to also be $n$ copies of some state $\rho'$ close to the original state $\rho$.
        \item We use this to bound the perturbation in the expected value. 
    \end{enumerate}
      \subsection{The perturbation of $\rho$}\label{sec_perturbation}

      Suppose at the beginning of the $j\ith$ estimation procedure, we have $n$ copies of some state $\rho_j$, i.e., we have $\rho_j^{\otimes n}$, and we run the estimation procedure mentioned earlier mentioned for $M_j$ with $k$ qubits.  

      The unitary operation applied, as mentioned before, is

      \[U_j=\exp\left(-\frac {\ii\pi}{6n} \left(\sum_{i=1}^n M_{j,(i)}\right)\otimes \left(\sum_{i=1}^k X_{(i)}\right)\right).\]

        Here, $M_{j,(i)}$ is the tensor product of matrices which are identity at all but the $i\ith$ place, where it is $M_j$.
      To obtain the $j\ith$ estimate, we had measured the $k$ qubits in the computational basis. Now, if we are only concerned with the correctness of estimates after the $j\ith$ one, we only need to track the mixed state obtained after the computational basis measurement, i.e., we can presume we did not look at the $j\ith$ measurement outcome. However, for this, it is equivalent to measuring the $k$ qubits in any basis. In particular, what we can do is instead measure the $k$ qubits in the basis in which $\left(\sum_{i=1}^k X_{(i)}\right)$ is diagonalized, which we know is the $\ket \pm $ basis. Equivalently, we can think of measuring $\left(\sum_{i=1}^k X_{(i)}\right)$ as an observable, which will give us one of the eigenvalues.  Note that we are doing this only for analysis and that this measurement doesn't actually occur in the algorithm. 

      Note now, that this measurement onto the $k$ qubits commutes with $U_j$, i.e., we could have instead performed this measurement before applying $U_j$, and the resultant state will not change. So, suppose, before applying $U_j$ , we get $\lambda_j$ as the measurement outcome when we measure $\left(\sum_{i=1}^k X_{(i)}\right)$ as an observable on the initial state $\ket{\psi_0}^{\otimes k}$ of the $k$ qubits. (We will discuss the distribution of $\lambda_j$ later.)

      Knowing $\lambda_j$, we know that the action of $U_j$ is equivalent to the action of 
      \[U_{j,\lambda_j}=\exp\left(-\frac {\ii\pi}{6n} \left(\sum_{i=1}^n M_{j,(i)}\right)\otimes (\lambda_j I)\right).\]
      as we have already projected onto the subspace where $\left(\sum_{i=1}^k X_{(i)}\right)$ has eigenvalue $\lambda_j$. Note that this new action does not interact with the $k$ qubits anymore, so we can simply look at the action onto the $n$ copies of $\rho_j$, which is
      \[\exp\left(-\frac {\ii\pi\lambda_j}{6n} \left(\sum_{i=1}^n M_{j,(i)}\right)\right)=\exp\left(-\frac {\ii\pi\lambda_j}{6n}  M_{j}\right)^{\otimes n}.\]
      Upon action of this unitary operator, we have
      \[\rho_j^{\otimes n}\mapsto \left(e^{-\frac {\ii\pi\lambda_j}{6n}M_j}\rho_j e^{\frac {\ii\pi\lambda_j}{6n}M_j}\right)^{\otimes n}.\]
      
      So, given $\lambda_j$, the resultant state still has a product structure, and so the algorithm can be used once more on the same copies to estimate the expectation of another POVM in a similar fashion, but the estimates we will get are for this new state $\rho_{j+1}=e^{-\frac {\ii\pi\lambda_j}{6n}M_j}\rho_j e^{\frac {\ii\pi\lambda_j}{6n}M_j}$. So, to show that the algorithm will still work after this, we need to ensure
      \begin{enumerate}
          \item Given $\lambda_j$, the expectation for the new state is not far off from the original
          \item The aforementioned disturbance needs to then take into account that $\lambda_j$ is random, so we instead need concentration of the disturbances.
      \end{enumerate}

      \subsection{Distribution of $\lambda_j$}\label{sec_eigen_dist}

      As we said earlier, $\left(\sum_{i=1}^k X_{(i)}\right)$ is diagonalised in the $\ket \pm $ basis. So, to get $\lambda_j$, we can simply measure each $X_{(i)}$ and sum up the outcomes. Now, on the initial state $\ket{\psi_0}$ , measuring $X_{(i)}$ leads to outcomes $+1$ and $-1$ with equal probability. So, $\lambda_j$ is distributed as the sum of $k$ Bernoulli $\pm 1$ variables. 
     \subsection{Bounding deviation in expectation}\label{sec_bnd_dv}

      In the $j\ith$ round, modified as described above, the initial state was $\rho_j^{\otimes n}$, and the final state is ${\rho_{j+1}}^{\otimes n}=\left(e^{-\frac {\ii\pi\lambda_j}{6n}M_j}\rho_j e^{\frac {\ii\pi\lambda_j}{6n}M_j}\right)^{\otimes n}$. So, the new expectation of $M_i$ is
      \begin{align*}
          \Tr[M_ie^{-\frac {\ii\pi\lambda_j}{6n}M_j}\rho_{j} e^{\frac {\ii\pi\lambda_j}{6n}M_j}]&=  \Tr[e^{\frac {\ii\pi\lambda_j}{6n}M_j}M_ie^{-\frac {\ii\pi\lambda_j}{6n}M_j}\rho_j ].
      \end{align*}

      Using Lemma \ref{lem_conjugate_bound}, and that $\norm{\frac {\pi \lambda_j}{6n}M_j}\leq 1$ (as $|\lambda_j|\leq k \leq n/10$), we get that 

      \begin{align*}
           &\quad \left|\Tr[e^{\frac {\ii\pi\lambda_j}{6n}M_j}M_ie^{-\frac {\ii\pi\lambda_j}{6n}M_j}\rho_j ]-\Tr[M_i\rho_j]-\frac{\pi\lambda_j}{6n}\Tr[\ii[M_j,M_i]\rho_j] \right|\\
           \leq &\quad \frac{\pi^2\lambda_j^2}{36n^2}\norm{[M_j[M_j,M_i]]}\cdot \frac{e^2-3}{4}\\
           \leq & \quad\frac{\pi^2\lambda_j^2}{36n^2}\norm{[M_j,M_i]}\cdot \frac{e^2-3}{2}\\
           \leq &\quad c_2\frac{\lambda_j^2}{n^2}
      \end{align*}
for some absolute constant $c_2$. Using this, we can define the third  term inside the $|\circ|$ as $S_{1,j}\eqdef\frac{\pi\lambda_j}{6n}\Tr[\ii[M_j,M_i]\rho_j]=c_1\frac {\lambda_j}{n}\Tr[\ii[M_j,M_i]\rho_j] $ for absolute constant $c_1=\frac{\pi\lambda_j}{6n}$. We can then write 
\begin{equation}
    \Tr[M_i\rho_{j+1}]-\Tr[M_i\rho_j]=S_{1,j}+S_{2,j},\label{eqn_deviation}
\end{equation}
with $S_{2,j}\leq  c_2\frac{\lambda_j^2}{n^2}$.

     \subsection{Combining everything together}\label{sec_cmb_all_tog}

\begin{theorem}\label{thm_sample_cmpl}
    The aforementioned estimation procedure, when used sequentially for each $M_j$, outputs estimates for each $\mathbb E_\rho[M_j]$, which, for any chosen index $i$, has an additive error at most $\epsilon$, with probability at least $1-\delta$, when the following constraints are satisfied.
    \begin{itemize}
        \item $k=c_0\frac 1 {\epsilon ^2}\log\frac 1{\delta}$.
        \item $n\geq 10k$.
        \item $n^2\geq 8c_1^2mk\frac 1 {\epsilon^2}\log \frac 1 \delta  $.
        \item $c_2C^2\frac{mk}{n^2}\ln 2+c_2C^2\frac k {n^2}\ln \frac 1 \delta\leq \epsilon $.
    \end{itemize}
    Here, $c_0,c_1,c_2$, and $C$ are all absolute constants.
\end{theorem}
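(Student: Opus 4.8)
The plan is to fix the target index $i$, track how the state on register $A$ drifts over the first $i-1$ estimation rounds, and then invoke Lemma~\ref{lem_single_estimation} for round $i$ itself. By the reduction in Section~\ref{sec_perturbation}, for the purpose of analysing rounds $i,i+1,\dots$ we may pretend that in each earlier round $j<i$ the ancilla register was measured in the $\ket{\pm}$ basis instead of the computational basis: discarding a measurement outcome and tracing out the ancilla produces the same reduced state on $A$ regardless of the measurement basis. Under this fiction the state on $A$ after round $j$ is $\rho_{j+1}^{\otimes n}$ with $\rho_{j+1}=e^{-\ii\pi\lambda_j M_j/6n}\rho_j e^{\ii\pi\lambda_j M_j/6n}$, where by Section~\ref{sec_eigen_dist} the $\lambda_j$ are i.i.d., each a sum of $k$ independent Rademacher ($\pm1$) variables, and $\lambda_j$ is independent of $\rho_1,\dots,\rho_j$; since $|\lambda_j|\le k\le n/10$ we have $\norm{\tfrac{\pi\lambda_j}{6n}M_j}<1$, so Lemma~\ref{lem_conjugate_bound} applies, which is what yielded Equation~\ref{eqn_deviation}. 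Telescoping that identity over $j=1,\dots,i-1$ gives
\[
\Tr[M_i\rho_i]-\mathbb{E}_\rho[M_i]=\sum_{j=1}^{i-1}S_{1,j}+\sum_{j=1}^{i-1}S_{2,j}.
\]

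For the first sum I would expose the cancellation by descending to the level of individual ancilla qubits: $S_{1,j}=\frac{c_1}{n}\Tr[\ii[M_j,M_i]\rho_j]\,\lambda_j=\sum_{l=1}^{k}\frac{c_1}{n}\Tr[\ii[M_j,M_i]\rho_j]\,\epsilon_{j,l}$, where $\epsilon_{j,l}$ are the Rademachers of round $j$. Ordering the pairs $(j,l)$ lexicographically, each summand is a martingale difference, because $\rho_j$ depends only on $\lambda_1,\dots,\lambda_{j-1}$ (which come strictly earlier) while $\epsilon_{j,l}$ is a fresh mean-zero variable, and each has magnitude at most $\tfrac{2c_1}{n}$ since $\norm{[M_j,M_i]}\le 2$. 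There are $(i-1)k\le mk$ of them, so Azuma's inequality (Lemma~\ref{lem_azuma}) gives $\Pr[\,|\sum_j S_{1,j}|\ge \epsilon/3\,]\le \delta/3$ exactly when $n^2=\Omega(mk\log(1/\delta)/\epsilon^2)$, which is the third hypothesis. The point is that unfolding to qubit granularity is what yields the $\sqrt{mk}$ scaling; bounding $|S_{1,j}|$ crudely by $\tfrac{2c_1k}{n}$ and summing $m$ terms would force the far weaker $n=\Omega(mk/\epsilon)$.

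For the second sum there is no cancellation, since $|S_{2,j}|\le c_2\lambda_j^2/n^2$ is nonnegative, so I would instead establish a sub-exponential upper tail for $\sum_{j=1}^{i-1}\lambda_j^2$. Each $\lambda_j$ is sub-Gaussian with $\norm{\lambda_j}_{\psi_2}=O(\sqrt k)$ by Lemma~\ref{subg_sum}, hence $\lambda_j^2$ is sub-exponential with parameter $O(k)$, and a Chernoff bound on the i.i.d.\ sum (control $\mathbb{E}[e^{t\lambda_j^2}]$ for $t=\Theta(1/k)$ and optimise $t$) gives $\sum_{j=1}^{i-1}\lambda_j^2=O(mk+k\log(1/\delta))$ except with probability $\delta/3$. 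Substituting into $\sum_j|S_{2,j}|\le \frac{c_2}{n^2}\sum_j\lambda_j^2$ and comparing with the fourth hypothesis $c_2C^2\frac{mk}{n^2}\ln 2+c_2C^2\frac{k}{n^2}\ln\frac1\delta\le\epsilon$ bounds this contribution by $\epsilon/3$.

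It remains to assemble the pieces. On the good event $G$ that both tail bounds hold, which has probability at least $1-2\delta/3$, we get $|\Tr[M_i\rho_i]-\mathbb{E}_\rho[M_i]|\le 2\epsilon/3$. The actual state on $A$ entering round $i$ is the $\lambda$-mixture of the product states $\rho_i^{\otimes n}$, so by linearity round $i$'s output distribution is the corresponding mixture of single-shot estimator outputs; conditioning on any $(\lambda_1,\dots,\lambda_{i-1})\in G$, Lemma~\ref{lem_single_estimation} applied with error $\epsilon/3$ and failure probability $\delta/3$ — which is what forces $k=\Theta(\log(1/\delta)/\epsilon^2)$ and $n\ge 10k$, the first two hypotheses — says the output is within $\epsilon/3$ of $\Tr[M_i\rho_i]$, hence within $\epsilon$ of $\mathbb{E}_\rho[M_i]$. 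A union bound over the failure of $G$ and the failure of the round-$i$ estimator gives total failure probability at most $\delta$. The stray factors of $3$ (and the $2$ in $\norm{[M_j,M_i]}\le 2$ and in Azuma) are absorbed into the absolute constants $c_0,c_1,c_2,C$, as in the proof of Lemma~\ref{lem_single_estimation}. I expect the main obstacle to be lining up these parameters — above all the qubit-level martingale decomposition of $\sum_j S_{1,j}$, without which a factor of $\sqrt m$ is lost, together with the sub-exponential control of $\sum_j\lambda_j^2$.
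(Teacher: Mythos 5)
Your proposal is correct and follows essentially the same route as the paper's proof: the same $\ket{\pm}$-basis reduction and telescoping identity, the same qubit-level Rademacher unfolding of $\sum_j S_{1,j}$ followed by Azuma to get the $\sqrt{mk}$ scaling, the same sub-Gaussian/MGF tail bound on $\sum_j\lambda_j^2$ for the second-order term, and the same invocation of Lemma~\ref{lem_single_estimation} for round $i$ with a final union bound and rescaling of constants.
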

This suffices to show the required sample complexity by the following simple union-bound argument:
\begin{corollary}\label{cor_main}
    The estimation procedure, with $n=O\left(\frac {\sqrt m}{\epsilon^2}\log \frac m \delta \right)$ samples of $\rho$ simultaneously estimates all $\mathbb E_\rho[M_j]$ with additive error at most $\epsilon$, with probability at least $1-\delta$.
\end{corollary}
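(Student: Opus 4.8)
The plan is to reduce the corollary to Theorem~\ref{thm_sample_cmpl} by a union bound, exactly as in the classical reduction sketched in the introduction. First I would fix a per-index failure probability $\delta' = \delta/m$ and apply Theorem~\ref{thm_sample_cmpl} once for each index $i \in \{1,\dots,m\}$ with $\delta'$ in place of $\delta$. The key observation is that the algorithm itself does not depend on $i$ — the index $i$ enters only the analysis — so a single run of the procedure, with $n$ and $k$ chosen large enough, simultaneously satisfies the hypotheses of Theorem~\ref{thm_sample_cmpl} for every $i$. Hence the event ``the estimate for $M_i$ is off by more than $\epsilon$'' has probability at most $\delta/m$ for each $i$, and a union bound over the $m$ indices gives total failure probability at most $\delta$.

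Next I would check that the four constraints of Theorem~\ref{thm_sample_cmpl} are met by taking $k = c_0\,\epsilon^{-2}\log(m/\delta)$ and $n = c\,\sqrt{m}\,\epsilon^{-2}\log(m/\delta)$ for a large enough absolute constant $c$. With $\delta' = \delta/m$, the first constraint $k = c_0\,\epsilon^{-2}\log(1/\delta') = c_0\,\epsilon^{-2}\log(m/\delta)$ holds by definition; the second, $n \ge 10k$, holds once $c/c_0 \ge 10$; the third, $n^2 \ge 8c_1^2\, m k\,\epsilon^{-2}\log(1/\delta') = 8c_0 c_1^2\, m\,\epsilon^{-4}\log^2(m/\delta)$, holds as soon as $c^2 \ge 8 c_0 c_1^2$, and this is the binding constraint that forces the $\sqrt{m}$ scaling. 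For the fourth constraint I would substitute the lower bound on $n^2$ coming from the third: then $mk/n^2 \lesssim \epsilon^2/\log(m/\delta) \le \epsilon$ and $k\log(m/\delta)/n^2 \lesssim \epsilon^2/m \le \epsilon$, so the left-hand side is $O(\epsilon^2) \le \epsilon$ for $\epsilon$ below an absolute constant, at worst after enlarging $c$ by a further absolute factor.

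Finally I would conclude that all four bounds are simultaneously met by $n = \Theta(\sqrt{m}\,\epsilon^{-2}\log(m/\delta))$, the claimed sample complexity, and remark that the third constraint is precisely where the quadratic saving over the naive $\Theta(m\,\epsilon^{-2}\log(m/\delta))$ bound comes from. I do not expect a genuine obstacle here: the only subtle points are the order of quantifiers — that one execution of the algorithm handles all indices $i$ at once, since the per-$i$ analysis only needs the same $n,k$ — and the bookkeeping of the absolute constants $c_0, c_1, c_2, C$ so that a single choice of $c$ makes constraints two through four hold together.
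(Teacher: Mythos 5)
Your proposal is correct and follows essentially the same route as the paper: invoke Theorem~\ref{thm_sample_cmpl} with $\delta$ replaced by $\delta/m$ and union bound over the $m$ indices. The only difference is that you spell out the verification of the four constraints (which the paper dismisses as ``easy to see''), and your bookkeeping there is accurate, including the observation that the third constraint is the one forcing the $\sqrt{m}$ scaling.
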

    \begin{proof}
        It is easy to see that the constraints in Theorem $\ref{thm_sample_cmpl}$ can be satisfied with $n=O\left(\frac {\sqrt m}{\epsilon^2}\log \frac 1 \delta \right)$. Now, invoke the theorem with $\delta$ replaced with $\frac {\delta}{m}$. Then, the sample complexity becomes $n=O\left(\frac {\sqrt m}{\epsilon^2}\log \frac m \delta \right)$, and by the union bound, the overall probability of failure is at most $\delta$.
    \end{proof}

We move on to the proof of Theorem \ref{thm_sample_cmpl}.

\begin{proofof}{Theorem \ref{thm_sample_cmpl}}
Fix an index $i\in [m]$. Our goal is to show that the $i\ith$ estimate is within $\epsilon$ distance of $\mathbb E_\rho[M_i]$. We will actually prove this up to constant factors in $\epsilon$ and $\delta$, which suffices, as we can correspondingly divide $\epsilon$ and $\delta$ by appropriate factors, only affecting the values of the absolute constants.

    Fix $k=c_0\frac 1 {\epsilon ^2}\log\frac 1{\delta}$ such that any single estimation goes through with probability $1-\delta$ in case of a single estimation (assuming $n\geq 10k$), using Lemma \ref{lem_single_estimation}. So, we just need to bound the deviation in mean after $i-1$ to within $O(\epsilon)$ with probability $1-O(\delta)$.

    The $i\ith$ measurement outcome can only be affected by the previous $i-1$ measurements. As before, we will, for the sake of analysis, assume that the measurements were done in the $\ket \pm$ basis, effectively measuring the observable $\sum_{l=1}^{k-1} X_{(l)}$ at each step, leading to the observed value $\lambda_j$ at the $j\ith$ step, and leading to a new product state as described earlier. We keep track of this change in the following way:

     Define the following intermediate quantum states: $\rho_1=\rho$, and 
     \[\rho_{j+1}=e^{-\frac {\ii\pi \lambda_j}{6n}M_j}\rho_{j}e^{\frac {\ii\pi \lambda_j}{6n}M_j}.\]
     i.e. , $\rho_{j+1}$ is the quantum state for which we have $n$ copies just before the $j\ith$ estimation step.

     With this, we can now write 
     \begin{align}
         |\Tr[M_i\rho_i]-\Tr[M_i\rho]|&=\left|\sum_{j=1}^{i-1}\left(\Tr[M_{i}\rho_{j+1}]-\Tr[M_i\rho_j]\right)\right|\label{eqn_telescoped_bgn}\\
         &=\left|\sum_{j=1}^{i-1}\left(S_{1,j}+S_{2,j}\right)\right|&\text{[as in \ref{eqn_deviation}]}\\
         &\leq \left|\sum_{j=1}^{i-1}S_{1,j}\right| + \left|\sum_{j=1}^{i-1}S_{2,j}\right|\\
         &\leq \frac {\pi}{6n}\left|\sum_{j=1}^{i-1}\lambda_j \Tr[i[M_i,M_j]\rho_j]\right| + \frac {\pi^2}{9n^2}\sum_{j=1}^{i-1}\lambda_j^2 \\
         &= \frac {c_1}{n}\left|\sum_{j=1}^{i-1}\lambda_j \Tr[i[M_i,M_j]\rho_j]\right| + \frac {c_2}{n^2}\sum_{j=1}^{i-1}\lambda_j^2 . \label{eqn_telescoped_difference}
     \end{align}
     In the first term, the quantity inside $|\circ |$ is a martingale difference sequence. It is easy to see that the expectation of quantity is 0, as each $\lambda_j$ is symmetric, and each $\rho_j$ only depends on the $\lambda_j$'s before them. We now prove that the quantity is indeed concentrated near 0 by using Azuma's inequality. For simplicity, define:
     \[q_j= \Tr[i[M_i,M_j]\rho_j]\]
     Note that $|q_j|\leq 2$, as $\norm{i[M_i,M_j]}\leq 2\norm{M_iM_j}\leq 2$, and as with $\rho_j$, $q_j$ is only dependent on $\lambda_l$s' for which $1\leq l<j-1$, which we denote by the vector $\lambda_{1:j-1}$

Note that the expression $\sum_{j=1}^{i-1} \lambda_jq_j$ can be rewritten as $\sum_{j=1}^{i-1} \sum_{l=1}^k\lambda_{j,l}q_j$, where each $\lambda_{j,l}$ is a uniform Bernoulli $\pm 1$ random variable. This is still a martingale difference sequence, but now, since each $\lambda_{j,l}q_j$ is bounded and has 0 mean (even conditioned on previous ones), we can apply Azuma's inequality (Lemma \ref{lem_azuma}) to get: 

\begin{align}
\Pr\left[\frac {c_1}{n}\left|\sum_{j=1}^{i-1}\lambda_j \Tr[i[M_i,M_j]\rho_j]\right|\geq  \epsilon \right]&=
  \Pr\left[\left|\sum_{j=1}^{i-1}\lambda_j \Tr[i[M_i,M_j]\rho_j]\right|\geq \frac {n}{c_1}\cdot \epsilon \right]\\
  &\leq 2\exp\left(-\frac{\epsilon^2n^2}{2mkc_1^2\norm{[M_i,M_j]}^2}\right)\label{eqn_first_order_prob}\\
  &\leq 2\exp\left(-\frac{\epsilon^2n^2}{8mkc_1^2}\right)  
\end{align}

        By the given constraints, this is at most $2\delta$.

        Now, for the other term, we need to just find concentration bounds for
        \[\frac {c_2}{n^2}\sum_{j=1}^{i-1} \lambda_j^2.\]
         For this, notice that each $\lambda_j$ is a sum of $k$ independent Bernoulli random variables. So, as a random variable, their Sub-Gaussian norms are at most $C\sqrt k$ for some absolute constant $C$, by Lemma \ref{subg_sum}.
        
        So, we have that 
        \[\mathbb E\left[\exp\left(\frac{\lambda_j^2}{C^2k}\right)\right]\leq 2\]
        \[\implies\mathbb E\left[\exp\left(\frac{\sum_{j=1}^{i-1} \lambda_j^2}{C^2k}\right)\right]\leq 2^{(i-1)}. \]
        From this, we get the concentration bound
        \[\Pr\left[\sum_j^{i-1}\lambda_j^2\geq \alpha\right]=\Pr\left[\exp((C^2k)^{-1}\sum_j^{i-1}\lambda_j^2)\leq e^{\frac{\alpha}{C^2k}}\right]. \]
        \[\leq \frac {2^i}{e^{\frac{\alpha}{C^2k}}}\leq \frac {2^m}{e^{\frac{\alpha}{C^2k}}}. \]
        For $\alpha=mC^2k\ln 2+C^2k\ln\frac 1 \delta$, we get
        \[\Pr\left[\sum_j^{i-1}\lambda_j^2\geq mC^2k\ln 2+C^2k  \ln \frac 1 \delta\right]\leq \delta.\]
        From this, we get,
        \[\Pr\left[\frac {c_2}{n^2} \sum_{j=1}^{i-1}\lambda_j^2\geq c_2C^2\frac{mk}{n^2}\ln 2+c_2C^2\frac k {n^2}\ln \frac 1 \delta\right]\leq \delta.\]
        By the given constraints, we know that the threshold is at most $\epsilon$. So, combining all these bounds, we get that the $i\ith$ estimate is within $3\epsilon$ additive error of $\mathbb E_\rho[M_i]$ with probability at least $1-4\delta$. 
\end{proofof}
\subsection{Other properties}\label{sec_other_props}

Having established the sample complexity in Theorem \ref{main_thm}, we now move on to the other claimed properties of the algorithm. 

\paragraph{Trivial Classical post-processing :} After the application of the quantum circuit and the subsequent measurements, we just need to compute the estimate $\frac 6  \pi \cdot \arcsin(\sqrt \mu) -1$ for each stage, so post the application of the quantum circuit no sophisticated classical processing is needed. 

\paragraph{Robust against qubit measurement noise:} We can assume that the measurements are performed after all gates are applied. So, measurement noise can only disturb the fraction of 0's and 1's we obtain for each estimation, and so it only affects the accuracy of the estimate. The perturbation in the estimate is proportional to the number of bit flips the noise causes. Furthermore, if this noise is well characterized, then the correct estimate may as well be extracted from the noisy estimate.

\paragraph{Gate Complexity:} We talk about the gate complexity in the case where the POVMs are efficiently implementable. Any implementation of a POVM as a quantum circuit corresponds to a projective measurement defined on the register plus some ancilla qubits, originally all in the state $\ket 0$. Since none of our arguments have any dimension dependence, we can, without loss of generality, assume that we are only dealing with projectors.

 We only need to implement the unitary operator in Equation $\ref{u_all}$, which is \begin{equation}
     \exp\left(-\ii\frac {\pi}{6n}\left(\sum_{i=1}^n M_{(i)}\right) \otimes\left(\sum_{i=1}^k X_{(i)}\right)\right),
\end{equation} $m$ times, where $M$ will vary over all the $M_i$'s. In each stage, we can measure each copy of $\rho$ with the projector $M$, and store the observations in $n$ corresponding ancilla qubits. We denote these $n$ qubits together as $P_i$, and the $j\ith$ one as $P_{i,j}$. This leads to gate complexity $n\cdot S_i$ at stage $i$. Next, we apply controlled rotations $CR_X(\frac {\pi}{3n})$, controlled from each of the qubits in $P_i$, to each of the $k$ ancilla qubits prepared in the state $\ket{\psi_0} $. The total gate complexity for this step is $n\cdot k$. Finally, we uncompute $M_i$ from all qubits in $P_i$.

Repeating this for all $m$ Projectors, we get the desired gate complexity of $\Theta(m\cdot k\cdot n+n\cdot \sum S_i)$. 

\paragraph{Implementation as a read-once quantum circuit with low memory :} Notice that the $P_{i,j}$s can be uncomputed directly after the application of all the corresponding controlled rotations from it to the $k$ ancilla qubits. So, we can modify the algorithm such that immediately after the computation of $P_{i,j}$, we apply all the corresponding controlled rotations ($k$ of those), and uncompute $P_{i,j}$. Notice that 
\begin{enumerate}
    \item $P_{i,j}$ can be discarded after this, and no two $P_{i,j}$'s are computed simultaenously, i.e, $P_{i,j}$ is uncomputed before any other $P_{i',j'}$ is computed. Therefore, we can keep reusing the same qubit for all $P_{i,j}$'s.
    \item The next operation involving the $j\ith$ copy of $\rho$ is computation of $P_{i+1,j}$. So, we can reorder the circuit so that all operations corresponding to the $j\ith$ copy of $\rho$ (for all $M_i$'s together) happen in one go.
\end{enumerate}

With this modification, we end up using only $\Theta(mk+n)$ qubits, and the circuit only accesses one copy of $\rho$ at a time.

\paragraph{Implementation with $O(1)$ additional memory:} Instead of applying the controlled rotation to the $k$ qubits together, we can perform it for a single qubit, out of $k$, at a time, measure it, and then repeat for each of the remaining $k-1$ qubits. Because of this, now a single qubit can be reused in all the $k$ rounds (assuming access to reset gates), and as before, we are using a single qubit for all the $P_{i,j}$'s as well. However, notice that because of this, each $P_{i,j}$ now needs to be computed (and uncomputed) $k$ times. Combining the effect over all the $m$ POVMs, the gate complexity has an additional overhead of $\Theta(nk\sum S_i)$, but now the algorithm only requires $2$ qubits. In general, with this approach, with $s+1$ qubits, the gate complexity overhead is $\Theta(\frac{nk}{s}\sum S_i)$.

\paragraph{Implementation with $O(\log n)$ extra memory:} With extra $O(\log n)$ bits, in each round (corresponding to the estimation of a particular $\mathbb E_{\rho}[M_i]$), we can sum up the answers of all $P_{i,j}$ in a size $O(\log n)$ register using the QFT based adder \cite{draper2000}. Notice that we can do this one at a time: compute $P_{i,j}$, add to the register, and uncompute $P_{i,j}$. Upon computation of the sum, we can apply multiplicity controlled rotations from this register to the $k$ ancilla qubits, one at a time, so we can reuse them. That is, we prepare one qubit, apply multiplicity-controlled rotations, measure it, reset it, and then repeat the process $k$ times on the same qubit.

Repeating this for each of the $m$ rounds (reusing the addition register for all rounds), we use only $O(\log n)$ additional memory and incur a smaller gate complexity overhead of $\Theta(mn\log n +mk\log n)$. Here, the $\log n$ factor appears due to the complexity of addition and multiplicity-controlled rotations.

These are just a few ways one can modify the algorithm, as it can be restructured in many other ways by reordering the computations and uncomputations, and shifting around commuting operations.
\paragraph{Other properties:} Note that our analysis considers the worst case. So, for many choices of the POVM's $M_1\dots M_m$, the inequalities will not be tight, which means that in certain cases, our techniques might lead to a smaller sample complexity for our algorithm. Consider the case when all the commutators have their norm bounded by some $C_{max}$. Then, if take this into account while using \ref{lem_conjugate_bound} to bound the deviations in Section \ref{sec_bnd_dv}, we get that $S_{2,j}$ in Equation \ref{eqn_deviation} is instead upper bounded by $c_2C_max\frac{\lambda_j^2}{n^2}$ (we allow changing the absolute constant $c_2$). We can also directly we can upper bound the expression in Equation \ref{eqn_first_order_prob} by 
\[2\exp\left(-\frac{\epsilon^2n^2}{2mkc_1^2C_{max}^2}\right).\]  With these changes, we can replace the constraints in Theorem \ref{thm_sample_cmpl} by:
\begin{itemize}
        \item $k=c_0\frac 1 {\epsilon ^2}\log\frac 1{\delta}$.
        \item $n\geq 10k$.
        \item $n^2\geq 2c_1^2C_{max}^2mk\frac 1 {\epsilon^2}\log \frac 1 \delta  $.
        \item $c_2C^2\frac{mk}{n^2}C_{max}\ln 2+c_2C^2C_{max}\frac k {n^2}\ln \frac 1 \delta\leq \epsilon $,
\end{itemize}

 which implies that the sample complexity for this special case (for the requirement in Theorem \ref{thm_sample_cmpl}) is $n=O\left( (C_{max}\sqrt m+1)\frac{1}{\epsilon^2} \log \frac {1}\delta+ \sqrt{C_{max}}\frac{\sqrt{m\log (1/\delta)}+\log(1/\delta)}{\epsilon^{1.5}} \right)$, and the true sample complexity, as described in Corollary \ref{cor_main}, is obtained by replacing $\delta$ with $\frac {\delta}{m}$.

\bibliographystyle{plain}
\bibliography{bibl}
\appendix
\section{Low memory read-once Quantum Circuit}\label{sec_low_mem}

In this section, we discuss how to bring down the number of ancilla qubits used in each estimation down to $O(\log n)$, which leads to a read-once Quantum Circuit with $O(m\log n)$ ancilla qubits.

WLOG let the number of samples $n$ be a power of $2$, for simplicity of the argument. Also, for this section, we restrict to the case of Projectors instead of arbitrary POVM elements, which by Naimark's dilation theorem, is sufficient with respect to sample complexity. For computational efficiency, we note that implementations of POVMs can again be modified to give a Projective measurement with the same gate complexity.

We want to maintain the guarantees of \ref{thm_sample_cmpl} (up to changes in the absolute constants). 

Consider an ancilla register $B$ with $2\log n +2$ qubits, in which we can store integers from $-2n^2$ to $2n^2-1$, i.e., the computational basis states can be labeled as $\ket x$ for $x$ being an integer in that range. Call this range as $R$. Define $N\eqdef 2n^2$ so that $R=[-N,N-1]\cap \mathbb Z$.

In this register, consider the initial state

\[\ket{\psi_p}=\frac {1}{\sqrt{\binom {2p}{4p}}} \sum_{x\in R}  \binom {2p}{p+x} \ket x, \]

for some $p$, to be suitably chosen later. We postpone showing that this is normalized.  Using the idea described in Section \ref{intuition}, we can use the state to sample the sum with the corresponding noise, provided that we can define a corresponding addition operator. We will discuss this noise later. For now, call the distribution $S$.

Now, consider the operator $Q$ which has the action

\[Q\ket x= Q\ket{x+1\pmod{ 2N}}.\]

This can easily be implemented using $CNOT, X$, and Toffoli gates. Define $T$ such that it satisfies $\exp(-\ii T)=Q$ and has the smallest possible eigenvalues (in magnitude). Since $Q^{4N}=I$, we have that all eigenvalues of $T$ are of the form $\frac {a{\pi}}{2N}$, where $a\in R$. Alternatively, we can also see this by seeing that, similar to Section \ref{intuition}, the Fourier vectors form the eigenbasis of $T$.

Then, if we want to estimate the mean of projector $M$ using $n$ copies of $\rho$ in register $A$,  we initialize the register $B$ in the given state, and apply the following unitary operator on $AB$:

\begin{equation}
    \exp\left(-\ii \left(\sum_{i=1}^n M^{(i)}\right)\otimes T  \right).\label{unitary_disc}
\end{equation} 
As in Section \ref{sec_single}, these are just $M$ controlled translations from each of the copies of $\rho$ to the ancilla register $B$. Note that since $M$ is a projector, we do not get ''halfway rotations", i.e., if the second register is in a computational basis state, it remains in one after this operation (assuming $A$ has already been measured in the eigenbasis of $M$).

After this, as before, we measure the register $B$ in the computational basis. Suppose the observation is $\mu$. The estimate we output is $\frac \mu n$. Using this, we can obtain the following lemma, analogous to Lemma \ref{lem_single_estimation}:

\begin{lemma}\label{lem_low_mem_single} The aforementioned estimation procedure outputs an estimate of the sample mean $\mathbb E_\rho[M]$ with additive error at most $\epsilon$, with probability at least $1-\delta$, when $p=O\left(\frac {n^2\epsilon^2} {\log(1/\delta)}\right)$, and $n=\Omega\left(\frac {\log (1/\delta)}{\epsilon^2}\right)$
\end{lemma}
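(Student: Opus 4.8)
The plan is to mimic the analysis of Lemma \ref{lem_single_estimation}, but now tracking the discrete ``particle-in-a-box'' ancilla described in Section \ref{intuition} rather than $k$ independent qubits. First I would verify the normalization of $\ket{\psi_p}$: the coefficients $\binom{2p}{p+x}$ are (up to the stated normalizing factor, which should read $1/\sqrt{\binom{4p}{2p}}$) exactly those arising from a $2p$-step $\pm1$ random walk, so $\sum_x \binom{2p}{p+x}^2 = \binom{4p}{2p}$ by Vandermonde's identity, provided $2p \le N$ so that the walk never wraps around $R$; this is where the constraint $p = O(n^2\epsilon^2/\log(1/\delta))$ enters, and since $n = \Omega(\log(1/\delta)/\epsilon^2)$ we also have $p \ge 1$. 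Measuring $\ket{\psi_p}$ in the computational basis yields $x$ with probability $\binom{2p}{p+x}^2/\binom{4p}{2p}$, i.e.\ a centered distribution with standard deviation $\Theta(\sqrt p)$ and sub-Gaussian norm $O(\sqrt p)$.

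Next I would carry out the reduction exactly as in Section \ref{sec_single}: since $M$ is a projector, the unitary \eqref{unitary_disc} commutes with measuring register $A$ in the eigenbasis of $M$, so WLOG assume that measurement is done first, observing eigenvalues $\lambda_i \in \{0,1\}$ on the $n$ copies. Conditioned on these, \eqref{unitary_disc} acts on $B$ as $\exp(-\ii (\sum_i \lambda_i) T) = Q^{\sum_i \lambda_i}$, which is just translation by the integer $L \eqdef \sum_{i=1}^n \lambda_i$ (no wrap-around occurs since $L \le n \ll N$ and the initial support lies in $[-2p, 2p]$). Hence the final computational-basis measurement returns $\mu = L + x$ where $x$ is the centered noise sample above, independent of $L$. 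The output estimate is $\mu/n = (L/n) + (x/n)$.

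Finally I would assemble the two error contributions. By Hoeffding (Lemma \ref{lem_hoeffding}), $|L/n - \mathbb{E}_\rho[M]| \le \epsilon/2$ except with probability $\delta/2$, which needs $n = \Omega(\log(1/\delta)/\epsilon^2)$ — the stated hypothesis. For the noise term, since $x$ is sub-Gaussian with norm $O(\sqrt p)$, a standard sub-Gaussian tail bound gives $|x/n| \le \epsilon/2$ except with probability $\delta/2$ as long as $\sqrt p / n = O(\epsilon/\sqrt{\log(1/\delta)})$, i.e.\ $p = O(n^2\epsilon^2/\log(1/\delta))$ — again the stated hypothesis. A union bound over the two events yields total additive error at most $\epsilon$ with probability at least $1-\delta$ (rescaling constants as usual). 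The main obstacle is the bookkeeping around the finite register: one must check that neither the initial binomial support nor the translation by $L$ ever causes the modular wrap-around $Q\ket x = \ket{x+1 \bmod 2N}$ to ``fold'' mass back onto itself, since that would break both the clean random-walk normalization and the identification of $Q^L$ with honest translation. Choosing $N = 2n^2$ comfortably dominates both $2p = O(n^2\epsilon^2/\log(1/\delta)) \le O(n^2)$ and $L \le n$, so this is safe, but it is the step that genuinely uses the size of $R$ rather than being a formal replay of the $m=1$ argument.
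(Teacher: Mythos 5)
Your proposal is essentially the paper's own argument: decompose the measurement outcome as signal plus noise ($\mu = L + x$), apply Hoeffding to the signal $L$, and apply a tail bound to the noise $x$ whose scale is $\Theta(\sqrt p)$, then union-bound; your explicit check that no modular wrap-around occurs (since $2p, n \ll N = 2n^2$) and your correction of the normalization factor to $1/\sqrt{\binom{4p}{2p}}$ are both welcome additions that the paper leaves implicit. The one place where you assert rather than prove is the claim that the distribution $\Pr[x] = \binom{2p}{p+x}^2/\binom{4p}{2p}$ has sub-Gaussian norm $O(\sqrt p)$: this is true but is not a formal consequence of the coefficients being binomial (the \emph{squared} binomial coefficients do not give the random-walk distribution). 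The paper closes this by identifying the distribution as hypergeometric --- it is the law of the number of marked elements in a uniformly random $2p$-subset of a $4p$-set with $2p$ marked elements, equivalently a sum of $2p$ samples without replacement --- and then invoking Hoeffding's theorem that his inequality persists under sampling without replacement, which yields $\Pr[|x|\ge n\epsilon] \le 2\exp(-n^2\epsilon^2/(4p))$ and hence exactly your constraint $p = O(n^2\epsilon^2/\log(1/\delta))$. You should supply that identification (or some equivalent justification) to make the sub-Gaussian claim rigorous; with it, your proof is complete and matches the paper's.
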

\begin{proof}
    As described in Section \ref{intuition}, we are effectively sampling from the random variable $A+S$, where $A$ is distributed as the empirical mean of $M$ on $\rho^{\otimes n}$. As before, by Hoeffding's inequality, we have that $A$ is within $n\epsilon$ of its mean with probability at least $1-\delta$. So, it remains to show that $S$ is also $n\epsilon$ close to $0$ with probability at least $1-\delta$.

    We first show that $\ket{\psi_p}$ is normalized. The square of the amplitudes are $\binom{p+x}{2p}^2 \cdot \binom{2p}{4p}^{-1}$. Now, we know that 

    \[\sum_{x\in R}\binom{2p}{p+x}^2=\sum_{i=0}^{2p} \binom{2p} {i}^2=\binom {4p}{2p},\]

    which shows the normalisation of $\ket{\psi_p}$. This actually shows something stronger: Consider randomly sampling $4p$ i.i.d. uniform Bernoulli 0,1  random variables.  Divide it into 2 sections of $2p$ random variables. Now, condition on the total sum being $2p$. Then, the conditional probability of getting exactly $p+x$ 1s in the first half is $\binom{p+x}{2p}\cdot \binom{p-x}{2p}\cdot \binom{2p}{4p}^{-1}= \binom{p+x}{2p}^2 \cdot \binom{2p}{4p}^{-1}$. This matches up with the distribution of $S$. 

    Now, also observe that this distribution is identical to the one where we have a bit string with $2p$ 1s and $2p$ 0s, and we sum up the elements at $2p$ uniformly random distinct positions (without replacement). In \cite{Hoeffding1963}, Hoeffding showed that the bound of Lemma \ref{lem_hoeffding} also holds in this setting.  So, using $t=n\epsilon$ in Lemma \ref{lem_hoeffding}, we get that

    \[\Pr[|S|\geq n\epsilon]\leq 2 \exp\left(-\frac{2n^2\epsilon^2}{8p}\right)\leq \delta \]

    By sufficiently scaling $\epsilon$ and $\delta$ in this argument, we get the required result.
\end{proof}

We now want to show the analogous result in the remaining subsections of Section \ref{sec_the_algorithm}. For this, we rewrite the unitary in Equation \ref{unitary_disc} as 

\[\exp\left(-\ii \frac 1 n\left(\sum_{i=1}^n M^{(i)}\right)\otimes (n\cdot T)  \right),\]

to express it in a form similar to Equation \ref{u_all}.

As in Algorithm \ref{the_algorithm}, the new algorithm is to simply repeat the estimation for all measurements $M_i$.

\begin{algorithm}
\label{the_algorithm_low_mem}
\caption{Lower memory algorithm for shadow tomography}
\begin{algorithmic}[1]
    \STATE Register $A\gets \rho^{\otimes n}$
    \STATE $N\gets 2n^2$
    \STATE $p \gets c_0\frac 1 {\epsilon^2}\log\frac 1 \delta$
    \FOR{$j = 1$ to $m$}
        \STATE Initialize register $B$ (with $2\log n+2$ qubits) in state $\ket{\psi_p}$.
        \STATE Apply unitary  $\exp\left(-\ii\left(\sum_{i=1}^n M_{j,(i)}\right) \otimes T \right)$ onto registers $AB$.
        \STATE Measure $B$ in computational basis. Suppose observed $\mu$
        \STATE Output  $\frac {\mu}{n}$
    \ENDFOR
\end{algorithmic}
\end{algorithm}

Next, noting the rewritten form of Equation \ref{unitary_disc} above, we retrieve the form of the perturbation in Section \ref{sec_perturbation}, just this time the eigenvalues $\lambda_j$ are of the operator $n\cdot T$ instead, and the Fourier basis measurement has been performed on the new initial state $\ket{\psi_p}$.

Unlike Section \ref{sec_eigen_dist}, we need to put in a bit more work to figure out the distribution corresponding to the Fourier transform of $\ket{\psi_p}$. Before doing that, observe that the Fourier basis vector 

\[\sum_{l\in R} \exp\left(\ii  \frac {2\pi l}{2N}\right)\ket x\]

is a $\frac {2\pi l}{2N}$ eigenvector of $T$, and therefore is a $\frac{\pi l}{2n}$ eigenvector of $n\cdot T$. 

Now, the Fourier transform of $\ket {\psi_p}$ is

\begin{align}
    F\ket{\psi_p}&=\frac 1 {\sqrt{2\binom{2p}{4p}N}}\sum_{j\in R}\left(\sum_{l\in R}\exp\left(\ii \frac {2\pi lj}{2N}\right)\binom {p+l}{2p} \right)\ket j\\ 
    &=\frac 1 {\sqrt{2\binom{2p}{4p}N}}\sum_{j\in R}\left(\exp\left(\ii \frac{j\pi }{2N}\right)+\exp\left(-\ii \frac{j\pi }{2N}\right)\right)^{2p}\ket j\\ 
    &=\frac 1 {\sqrt{2\binom{2p}{4p}N}}\sum_{j\in R}\left(2\cos\left(\frac{j\pi }{2N}\right)\right)^{2p}\ket j
\end{align}

Hence, on Fourier basis measurement, we observe the $l\ith$ Fourier basis vector (for $l\in R$) with probability $f(l)=\frac {\left(2\cos\left(\frac{l\pi}{2N}\right)\right)^{4p}}{2\binom {2p}{4p}N }$. Let $L$ be a random variable with this distribution, then $\lambda_j\sim \frac {\pi L}{2n}$.
\subsection{Sub-Gaussian Norm}

We now upper bound the Sub-Gaussian norm of the described distribution $L$, which directly bounds the Sub-Gaussian norm of $\lambda_j$. We know that if 

\begin{equation}
  \mathbb E[\exp(L^2/K^2)]\leq 2,  \label{eqn_L_sbgnorm}
\end{equation}

then $\norm{L}_{\psi_2}\leq K$ and $\norm{\lambda_j}_{\psi_2}\leq K \cdot \frac {\pi}{2n}$. We show that this holds with some $K=O\left(\frac{N}{\sqrt p}\right)$. First, observe that $f(x)=f(0)\cdot \cos^{4p}\left(\frac{x\pi}{2N}\right)$. Next, take some $l_{max}\in R\cap \mathbb Z_+$. We will fix this value later on.  

We first bound $f(0)=\frac {2^{4p}}{2\binom {4p}{2p}N}$. By Sterling's approximation, we know that $f(0)= \Theta\left(\frac{\sqrt p}{N}\right)$. We use the more explicit bound of $f(0)\leq \frac{\sqrt{2p}}N$, which can be derived from the bound in \cite{stack_exchange_sterling}. With this, we have

\begin{align}
    \mathbb E[\exp(\lambda_j^2/K^2)]&\leq \exp\left(\frac {l_{max}^2}{K^2}\right)+ f(0)\sum_{l\in R, |l|\leq l_{max} } \cos^{4p}\left(\frac{l\pi}{2N}\right)\cdot \exp\left(\frac {l^2}{K^2}\right) &[\text{Splitting at $l_{max}$}]\\
    &\leq  \exp\left(\frac {l_{max}^2}{K^2}\right)+ f(0)\sum_{l\in R, |l|\leq l_{max} } \left(1-\frac 1 4\left(\frac{l\pi}{2N}\right)^2\right)^{4p}\cdot \exp\left(\frac {l^2}{K^2}\right)&[\text{by \ref{lem_cos_bound}}]\\
    &\leq  \exp\left(\frac {l_{max}^2}{K^2}\right)+ f(0)\sum_{l\in R, |l|\leq l_{max} } \exp\left(\frac {l^2}{K^2}-p\left(\frac{l\pi}{2N}\right)^2\right)\\
    &=\exp\left(\frac {l_{max}^2}{K^2}\right)+ f(0)\sum_{l\in R, |l|\leq l_{max} } \exp\left(l^2\left(\frac {1}{K^2}-p\left(\frac{\pi}{2N}\right)^2\right)\right)\\
\end{align}

Let $K=\frac {4N}{\pi \sqrt p}K_0$, for some $K_0\geq 1$ (to be fixed later), from which we get $\frac 1{K^2}\leq \frac{p}2\left(\frac{\pi}{2N}\right)^2$. Using this to simplify the above expression:

\begin{align}
    \mathbb E[\exp(\lambda_j^2/K^2)]&\leq \exp\left(\frac {l_{max}^2}{K^2}\right)+ f(0)\sum_{l\in R, |l|\leq l_{max} } \exp\left(-p\frac{l^2\pi^2}{8N^2}\right)\\
    &\leq  \exp\left(\frac {l_{max}^2}{K^2}\right)+ 2f(0)\sum_{l=l_{max}+1 }^{2n^2} \exp\left(-p\frac{l^2\pi^2}{8N^2}\right)\\
    &\leq  \exp\left(\frac {l_{max}^2}{K^2}\right)+ 2f(0)\exp\left(-p\frac{l_{max}^2\pi^2}{8N^2}\right)\sum_{l=0 }^{\infty} \exp\left(-p\frac{l_{max}\cdot l\pi^2}{8N^2}\right)\\
    &\leq  \exp\left(\frac {l_{max}^2}{K^2}\right)+ 2f(0)\exp\left(-p\frac{l_{max}^2\pi^2}{8N^2}\right)\cdot \frac 1{1-\exp\left(-p\frac{l_{max}\cdot \pi^2}{8N^2}\right)}\\
    &\leq  \exp\left(\frac {l_{max}^2}{K^2}\right)+ 2f(0)\exp\left(-p\frac{l_{max}^2\pi^2}{8N^2}\right)\cdot \frac 1{\left(\frac{pl_{max}\cdot \pi^2}{8N^2}\right)\left(1-\frac{pl_{max}\cdot \pi^2}{32n^4}\right)}&[\text{by \ref{lem_e-x_bound}}]
\end{align}

Now, let $l_{max}=\frac {4N}{\pi \sqrt p} l_0$ for some $l_0\geq 1$. Then, the above expression becomes upper bounded as 

\begin{align}
    \mathbb E[\exp(\lambda_j^2/K^2)]&\leq  \exp\left(\frac {l_0^2} {4K_0^2}  \right)+ 2f(0)\exp\left(-4l_0^2\right)\cdot \frac 1{\left(\frac{\sqrt pl_{0}\cdot \pi}{2N}\right)\left(1-\frac{\sqrt pl_{0}\cdot \pi}{4N}\right)}&[\text{by \ref{lem_e-x_bound}}]\\
    &\leq  \exp\left(\frac {l_0^2} {4K_0^2}  \right)+ \frac {4\sqrt 2 }{\pi}\cdot \exp\left(-4l_0^2\right)\cdot \frac {1}{l_0\left(1-\frac{\sqrt pl_{0}\cdot \pi}{4N}\right)}&[f(0)\leq\sqrt {2p}/N]\\
    &\leq \exp\left(\frac {l_0^2} {4K_0^2}  \right)+ \frac {4\sqrt 2 }{\pi}\exp\left(-4l_0^2\right)\cdot \frac 1{l_0\left(1-\frac{l_{0}\cdot \pi}{4}\right)}
\end{align}

Now, fix $l_0=1$. It can be verified with computation that the second term is at most $0.5$. With $K_0=1$, the first term is also at most $1.5$. Combining this, we get that Equation \ref{eqn_L_sbgnorm} holds with $K=\frac{4N}{\pi\sqrt p}$, so we have that $\norm{\lambda_j}_{\psi_2}\leq \frac {4n}{\sqrt p}$.
\begin{lemma}
    For any $\lambda_j$, the Sub-Gaussian norm is at most $C \cdot \frac{n}{\sqrt p}$ for some absolute constant $C$. Also, the maximum value of $|\lambda_j|$ is $N\cdot \frac \pi {2n}$
\end{lemma}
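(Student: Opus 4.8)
The plan is to read both assertions off material already in hand: the Sub-Gaussian bound is exactly what the preceding chain of inequalities delivers once the free parameters are fixed, and the extremal value follows from the spectrum of $n\cdot T$. No new ingredient is needed; the lemma is a clean summary of the subsection's computation.

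First, for the Sub-Gaussian norm, I would note that the bound derived above, specialized to $l_0 = K_0 = 1$ (so $K = \frac{4N}{\pi\sqrt p}$ and $l_{max} = \frac{4N}{\pi\sqrt p}$), reads $\mathbb E[\exp(L^2/K^2)] \le \exp(1/4) + \frac{4\sqrt 2}{\pi}\exp(-4)\cdot\frac{1}{1 - \pi/4} \le 1.5 + 0.5 = 2$, which is precisely Equation~\ref{eqn_L_sbgnorm}. By the definition of the Sub-Gaussian norm this gives $\norm{L}_{\psi_2} \le \frac{4N}{\pi\sqrt p}$, and since $\lambda_j \sim \frac{\pi L}{2n}$, homogeneity of $\norm{\cdot}_{\psi_2}$ yields $\norm{\lambda_j}_{\psi_2} \le \frac{\pi}{2n}\cdot\frac{4N}{\pi\sqrt p} = \frac{2N}{n\sqrt p} = \frac{4n}{\sqrt p}$, using $N = 2n^2$. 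So the first claim holds with $C = 4$.

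Second, for the maximal value, I would recall that $\lambda_j$ is by construction an eigenvalue of $n\cdot T$, and that the eigenvectors of $T$ are the Fourier basis vectors indexed by $l \in R = [-N,N-1]\cap\mathbb Z$, the vector with label $l$ being a $\frac{\pi l}{N}$-eigenvector of $T$ and hence a $\frac{\pi l}{2n}$-eigenvector of $n\cdot T$ (again using $N = 2n^2$). Therefore $|\lambda_j| \le \max_{l\in R}\frac{\pi|l|}{2n} = \frac{\pi N}{2n} = N\cdot\frac{\pi}{2n}$, with equality attained at $l = -N$, which gives the stated value.

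There is essentially no obstacle here, since everything is bookkeeping on top of the subsection's estimates; the only point requiring care is the numerical check invoked above, namely that the ``second term'' is at most $0.5$ (equivalently $\frac{4\sqrt 2}{\pi}\exp(-4)\,(1-\pi/4)^{-1} \le 0.5$) and that $\exp(1/4) \le 1.5$, together with the Stirling-type estimate $f(0) \le \sqrt{2p}/N$. These are routine, but they are what pins down the constant $2$ in the Sub-Gaussian definition, so I would verify the arithmetic once more before committing to $C = 4$ (any larger absolute constant coming from a looser Stirling bound would work equally well).
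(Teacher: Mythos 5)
Your proof is correct and follows the paper's own route exactly: the Sub-Gaussian bound is read off from the preceding chain of estimates specialized to $l_0=K_0=1$ (giving $\norm{L}_{\psi_2}\le \frac{4N}{\pi\sqrt p}$) and then rescaled by the factor $\frac{\pi}{2n}$ relating $\lambda_j$ to $L$, while the bound $|\lambda_j|\le N\cdot\frac{\pi}{2n}$ comes from $|L|\le N$, just as the paper notes. The only quibble is your claim that equality is attained at $l=-N$: that outcome has probability zero since $f(-N)\propto\cos^{4p}(\pi/2)=0$, but this is immaterial because the lemma is only used as an upper bound, which you establish the same way the paper does.
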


Here, the second part follows because because $|L|\leq N$.
\subsection{Analysis}

It is easy to see that we can now recreate the argument in Section \ref{sec_bnd_dv}, keeping in mind that now we consider 

\[\rho_{j+1}=e^{-\frac {\ii \lambda_j}{n}M_j}\rho_j e^{+\frac {\ii \lambda_j}{n}M_j}.\]

We again have that $\norm {\frac{\lambda_j}{n}M_j}\leq 1$, so the argument still works, up to a change in the absolute constants in Equation \ref{eqn_deviation}.

For the arguments in Section \ref{sec_cmb_all_tog}, we recreate them as follows.

We mainly need to show an analog of Theorem \ref{thm_sample_cmpl}. Instead of Lemma \ref{lem_single_estimation}, we will be using Lemma \ref{lem_low_mem_single}, for which we will work out the corresponding constraints.
\begin{theorem}\label{thm_smpl_cmpl_low_mem}
    The new Algorithm \ref{the_algorithm_low_mem}, for each $i\in [m]$ (individually), outputs an estimate for $\mathbb E_{\rho}[M_i]$ which has an additive error of at most $\epsilon$, with probability at least $1-\delta$, when the following constraints hold (with appropriate absolute constants):
    \begin{enumerate}
        \item $n\geq C_0\frac {\log(1/\delta)}{\epsilon^2}$ (and $n$ is a power of $2$).
        \item $p\leq C_1\frac{n^2\epsilon^2}{\log(1/\delta)}$
        \item  $p\geq C_3\frac {m\log (1/\delta)}{\epsilon^2}$.
        \item $p\geq C_4\frac{\log (1/\delta)}{\epsilon} +C_5 \frac {m}{\epsilon} $
    \end{enumerate}
\end{theorem}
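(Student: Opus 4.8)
The plan is to mirror the proof of Theorem \ref{thm_sample_cmpl} essentially verbatim, tracking carefully the two places where the low-memory construction differs: the single-estimation guarantee (now Lemma \ref{lem_low_mem_single} instead of Lemma \ref{lem_single_estimation}) and the distribution of the kickback eigenvalues $\lambda_j$ (now $\lambda_j \sim \frac{\pi L}{2n}$ with $L$ the Fourier-dual variable, whose Sub-Gaussian norm has just been shown to be $O(n/\sqrt p)$ and whose maximum magnitude is $N\pi/(2n) = n\pi$). First I would fix an index $i \in [m]$ and, exactly as before, argue that only the previous $i-1$ estimations can affect the $i$-th one; for the analysis we measure the ancilla register $B$ at each earlier step in the Fourier basis rather than the computational basis, which commutes with the applied unitary \ref{unitary_disc}, so the state fed into step $i$ is $\rho_i^{\otimes n}$ with $\rho_{j+1} = e^{-\ii\lambda_j M_j/n}\rho_j e^{\ii\lambda_j M_j/n}$. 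Condition 1, via Lemma \ref{lem_low_mem_single} (with $p$ in the role played there, and noting the $n \ge \Omega(\log(1/\delta)/\epsilon^2)$ requirement), guarantees that a single clean estimation of $M_i$ on $\rho_i^{\otimes n}$ is within $O(\epsilon)$ with probability $1-O(\delta)$; condition 2 is exactly the upper bound on $p$ that Lemma \ref{lem_low_mem_single} demands.

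Next I would telescope $|\Tr[M_i\rho_i] - \Tr[M_i\rho]| \le |\sum_{j<i} S_{1,j}| + |\sum_{j<i} S_{2,j}|$ as in equations \ref{eqn_telescoped_bgn}--\ref{eqn_telescoped_difference}, where now (by the remark following Section \ref{sec_perturbation}, using $\|\lambda_j M_j/n\| \le 1$ since $|\lambda_j| \le n\pi$... — here I would actually re-examine the constant, since $|\lambda_j|$ can be as large as $\Theta(n)$, so I should instead use the effective bound that matters, namely the typical size of $\lambda_j$, which is $O(n/\sqrt p)$) $S_{1,j} = c_1'\frac{\lambda_j}{n}\Tr[\ii[M_i,M_j]\rho_j]$ and $S_{2,j} \le c_2'\frac{\lambda_j^2}{n^2}$ for new absolute constants. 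For the first-order term: $\sum_{j<i}\lambda_j q_j$ with $q_j = \Tr[\ii[M_i,M_j]\rho_j]$, $|q_j|\le 2$, is a martingale with mean zero (each $\lambda_j$ is symmetric, $\rho_j$ depends only on earlier $\lambda$'s). Since $\lambda_j$ is now not a sum of bounded increments but a single Sub-Gaussian variable with norm $O(n/\sqrt p)$, I would apply the Sub-Gaussian martingale / Azuma-Hoeffding bound (using Lemma \ref{subg_sum} and that a sum of independent Sub-Gaussians with mean zero is Sub-Gaussian with the stated norm growth — applied conditionally along the filtration) to get $\Pr[\frac{c_1'}{n}|\sum\lambda_j q_j| \ge \epsilon] \le 2\exp(-c\,\epsilon^2 n^2 / (m \cdot (n/\sqrt p)^2)) = 2\exp(-c\,\epsilon^2 p/m)$, which is $\le 2\delta$ precisely by condition 3. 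For the second-order term, $\sum_{j<i}\lambda_j^2$ with each $\|\lambda_j\|_{\psi_2} = O(n/\sqrt p)$: the MGF bound $\mathbb E[\exp(\lambda_j^2/(C^2 n^2/p))] \le 2$ gives $\mathbb E[\exp(\frac{p}{C^2 n^2}\sum\lambda_j^2)] \le 2^{i-1} \le 2^m$, hence $\Pr[\sum\lambda_j^2 \ge \frac{C^2 n^2}{p}(m\ln 2 + \ln\frac1\delta)] \le \delta$, so $\frac{c_2'}{n^2}\sum\lambda_j^2 \le \frac{c_2'C^2}{p}(m\ln 2 + \ln\frac1\delta) \le O(\epsilon)$ exactly when condition 4 holds. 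Union-bounding the clean-estimation failure and the two deviation events gives the $i$-th estimate within $O(\epsilon)$ with probability $1-O(\delta)$, and rescaling $\epsilon,\delta$ by constants closes the argument.

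The main obstacle I anticipate is being careful with condition 2 and the worst-case bound on $|\lambda_j|$. Unlike the qubit version where $|\lambda_j| \le k \le n/10$, here $|\lambda_j|$ ranges up to $n\pi$, so $\|\lambda_j M_j/n\|$ is $O(1)$ but not small, and one must check that Lemma \ref{lem_conjugate_bound} still applies with the $\norm{A}\le 1$ hypothesis — which forces the argument to really use that $\frac{\pi L}{2n}M_j$ has norm at most $1$, i.e. $|L| \le \frac{2n}{\pi}$; but $|L|$ can be as large as $N = 2n^2$. I would resolve this either by observing that the tail event $|L| > \frac{2n}{\pi}$ has probability exponentially small in $\sqrt p$ (by the Sub-Gaussian bound on $L$) and absorbing it into the failure probability, which requires an extra mild lower bound on $p$ — this is plausibly the origin of the additive $C_5 m/\epsilon$ and $C_4\log(1/\delta)/\epsilon$ pieces in condition 4 — or by noting that even for large $|\lambda_j|$ the conjugation map is an isometry so $S_{2,j}$ is trivially $O(1)$ and only the (rare) large values need separate treatment. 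Reconciling the precise form of condition 4 with whichever of these routes I take is the step that needs the most care; everything else is a routine transcription of Theorem \ref{thm_sample_cmpl}'s proof with $k \rightsquigarrow$ (effective variance $n^2/p$) and $\delta$ eventually replaced by $\delta/m$ for the union bound over all $i$.
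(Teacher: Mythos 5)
Your proposal matches the paper's proof essentially step for step: invoking Lemma \ref{lem_low_mem_single} under constraints 1--2, telescoping via equations \ref{eqn_telescoped_bgn}--\ref{eqn_telescoped_difference}, a conditional sub-Gaussian MGF bound for the first-order martingale term (yielding $2\exp(-\tilde{C}p\epsilon^2/m)\le 2\delta$ under constraint 3), and the $2^{m}\exp(-\tilde{C}p\epsilon)$ bound on $\sum_j\lambda_j^2$ yielding constraint 4. Your worry about $|\lambda_j|$ reaching $\pi n$ is fair --- the paper simply asserts $\|\lambda_j M_j/n\|\le 1$ when the true worst case is $\le\pi$ --- but this only degrades the absolute constant in Lemma \ref{lem_conjugate_bound} (whose series still converges for $\|A\|\le\pi$), so neither of your tail-event workarounds is needed, and condition 4 arises purely from the second-order concentration you already derived rather than from any truncation of large $|\lambda_j|$.
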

\begin{proofof}{Theorem \ref{thm_smpl_cmpl_low_mem}}
    From the first two constraints, we get that Lemma \ref{lem_low_mem_single} can invoked. So, at the $i\ith$ step, we are able to get an estimate of $\mathbb E_{\rho_i}[M_i]$ with additive error at most $\epsilon$ with probability at least $1-\delta$. So, it only remains to bound the deviation of $\mathbb E_{\rho_i}[M_i]$ from $\mathbb E_{\rho}[M_i]$.

Notice that in the proof of Theorem \ref{thm_sample_cmpl}, all the equations from \ref{eqn_telescoped_bgn} to \ref{eqn_telescoped_difference} follow up to changes in the constants. 

We rewrite Equation \ref{eqn_telescoped_difference} for simplicity:

\[ |\Tr[M_i\rho_i]-\Tr[M_i\rho]|\leq  \frac {c_1}{n}\left|\sum_{j=1}^{i-1}\lambda_j \Tr[i[M_i,M_j]\rho_j]\right| + \frac {c_2}{n^2}\sum_{j=1}^{i-1}\lambda_j^2 .\]

As in the proof for $\ref{thm_sample_cmpl}$, we again first bound the first term.  Define again $q_j\eqdef \Tr[i[M_i,M_j]\rho_j]$, which is always at most 2. Notice that $q_j$ only depends on $\lambda_1,\lambda_2\dots \lambda_{j-1}$. We denote this set of random variables as $\lambda_{1:j-1}$. With this, we try to bound the first term using an M.G.F. argument. Let $K\eqdef \norm{\lambda_j}_{\psi_2}$. Choose $\alpha>0$. In this argument, we will use $\Tilde C$ to denote absolute constants (multiple uses can correspond to possibly different absolute constants). We will use properties from \cite{HDP} about Sub-Gaussian distributions, specifically the ones in Proposition 2.5.2. We have

\begin{align}
    \Pr\left[\sum_{j=1}^{i-1}\lambda_jq_j\leq \frac {n\epsilon}{c_1} \right]&\leq \Pr\left[\exp\left(\alpha \sum_{j=1}^{i-1}\lambda_jq_j\right)\geq e^{\alpha \frac {n\epsilon }{c_1}}\right]\\
    &\leq  \mathbb E[\exp\left( \alpha \sum_{j=1}^{i-1}\lambda_jq_j\right)]\cdot e^{-\alpha\frac {n\epsilon }{c_1}}\\
     &\leq \mathbb E[\exp\left( \alpha\sum_{j=1}^{i-2}\lambda_jq_j\right)\cdot \mathbb E[\exp\left(\alpha \lambda_jq_j\right)|\lambda_{1:j-1}]]\cdot e^{-\alpha \frac {n\epsilon }{c_1}}\\
     &\leq \mathbb E[\exp\left( \alpha\sum_{j=1}^{i-2}\lambda_jq_j\right)\cdot \exp(\Tilde C\alpha^2 K^2) \cdot e^{-\alpha \frac {n\epsilon }{c_1}}\\
     &\quad \quad \quad \quad \quad [\text{using property (v) in Proposition 2.5.2 of \cite{HDP}}]   \nonumber\\
     &\leq  \exp(\Tilde C(i-1)\alpha^2 K^2) \cdot e^{-\alpha \frac {n\epsilon }{c_1}}\\
     &\quad \quad \quad \quad \quad [\text{repeating $i-1$ times}]\nonumber
\end{align}
This holds for all $\alpha>0$, so we can minimize the last expression over $\alpha$ to get 

\begin{equation}
    \Pr\left[\sum_{j=1}^{i-1}\lambda_jq_j\leq \frac {n\epsilon}{c_1} \right]\leq \exp\left(-\Tilde C\frac {n^2\epsilon^2}{(i-1)K^2} \right)\leq \exp\left(-\Tilde C \frac {p\epsilon^2}{m}\right) 
\end{equation}

Using the same argument for the other side, we get

\begin{equation}
    \Pr\left[\frac {c_1}n\left|\sum_{j=1}^{i-1}\lambda_jq_j\right|\leq \epsilon\right]\leq 2\exp\left(-\Tilde C \frac {p\epsilon^2}{m}\right) 
\end{equation}

By the given constraints, this is at most $2\delta$.

For the second term in \ref{eqn_telescoped_difference}, we again see that 

\[\mathbb E\left[\frac {\sum_{j=1}^{i-1} \lambda_j^2}{K^2}\right]=2^{i-1}\]

So we have that 

\begin{align*}
    \Pr\left[\frac {c_2}{n^2} \sum_{j=1}^{i-1} \lambda_j^2\leq \epsilon \right] &\leq 2^{i-1}\cdot \exp\left(-\Tilde C\frac {n^2\epsilon } {K^2}  \right) \\
    &\leq2^{m}\cdot \exp\left(-\Tilde C{p\epsilon } \right) 
\end{align*}
By the given constraints, this is at most $\delta$.

So, overall, at the $i\ith$ stage, we receive an estimate with additive error at most $3\epsilon$ and probability of failure at most $4\delta$. Scaling $\epsilon$ and $\delta$ appropriately in the argument (with appropriate changes in the absolute constants) gives us the result.

\end{proofof}

By an argument similar to Corollary \ref{cor_main}, we see that the sample complexity of Algorithm \ref{the_algorithm_low_mem} is also the same.

\subsection{Implementation details}\label{sec_low_mem_implementation}

We see how the properties discussed in Section \ref{sec_other_props} change:

\begin{enumerate}
	\item The classical post-processing for Algorithm \ref{the_algorithm_low_mem} is still trivial.
	\item Unlike Algorithm \ref{the_algorithm}, Algorithm \ref{the_algorithm_low_mem} is no longer robust against qubit measurement noise. This is because, amongst the $2\log n +2$ qubits, an error in the most significant bit can swing the value of the output estimate dramatically.
	\item For the gate complexity of the algorithm, the $\Theta(n\cdot \sum S_i)$ term is unchanged, as we still need to store the answer of each measurement in an ancilla qubit. However, the unitary operator in Equation \ref{unitary_disc} can now be split up into products of $\exp(\ii M^{(i)}\otimes T)$, each of which can be implemented as a controlled 1-addition, controlled by the ancilla qubit containing the answer of $M$ on the $i\ith$ copy of $\rho$, and the target being the register $B$. This has complexity $O(\log n)$ for each copy of $\rho$, so overall complexity for this is $O(m\cdot n\cdot \log n)$, making the overall gate complexity $O(n\cdot m\cdot \log n +n\cdot\sum S_i)$, apart from the cost of constructing the state $\ket{\psi_p}$ $m$ times, which will now have cost $O(m\cdot \mathrm{polylog}(n))$.
	
	\item The implementation as a read once circuit can now be done with $O(m\log n)$ working memory, as we replace operations on the $k$-ancilla qubits with the operations on the $B$ register with $O(\log n)$ qubits.
	
	\item Implementation with $O(1)$ additional memory is no longer directly possible for Algorithm \ref{the_algorithm_low_mem} as each estimation required the register $B$ with $O(\log n)$ bits.
	\item Implementation with $O(\log n )$ memory is achieved more directly by simply reusing the $B$ register for all rounds.
\end{enumerate}

\end{document}